\newcommand{\RomanNumeralCaps}[1]
    {\MakeUppercase{\romannumeral #1}}
\newtcolorbox{myblock}{
  colback=blue!10,
  colframe=blue!50!black,
  fonttitle=\bfseries
}
\newcommand{\Rmnum}[1]
\newtheorem{theorem}{Theorem}[section]
\newtheorem{proposition}[theorem]{Proposition}
\newtheorem{lemma}[theorem]{Lemma}
\theoremstyle{definition}
\newtheorem{definition}[theorem]{Definition}
\newtheorem{mechanism}{Mechanism}
\newtheorem{remark}[theorem]{Remark}
\begin{document}

\title{Incentivizing Social Information Sharing Through Routing Games}
\author{Songhua Li and Lingjie Duan, \IEEEmembership{Senior Member, IEEE}
\thanks{Songhua Li and Lingjie Duan are with the Pillar of Engineering Systems and Design, Singapore University of Technology and Design, Singapore.  \\Email: \{songhua\_li,lingjie\_duan\}@sutd.edu.sg.}
}

\markboth{
}%
{Songhua Li and Lingjie Duan: }


\maketitle

\begin{abstract}
Mobile crowdsourcing platforms leverage a mass of mobile users and their vehicles to learn massive point-of-interest (PoI) information while traveling and share it as a public good.  Given that the crowdsourced users mind their travel costs and have diverse usage preferences for PoI information along different paths, we formulate the problem as a novel non-atomic multi-path routing game. This game features positive network externalities from social information sharing, distinguishing it from the traditional routing game literature that primarily focuses on congestion control of negative externalities among users. Our price of anarchy (PoA) analysis shows
that in the absence of any incentive design, users' selfish routing on the lowest-cost path will significantly limit PoI diversity leading to an arbitrarily large efficiency loss from the social optimum with a PoA of $0$. This motivates us to design effective incentive mechanisms to remedy while upholding desirable properties including individual rationality (IR), incentive compatibility (IC), and budget balance (BB) to ensure practical feasibility.  Without knowing a specific user's path preference, we first present a non-monetary mechanism called Adaptive Information Restriction (AIR) that satisfies all the desirable properties. Our AIR indirectly penalizes non-cooperative users by adaptively reducing their access to the public good, according to actual user flows along different paths. It achieves a significant PoA of $\frac{1}{4}$ 
with low complexity $O(k\log k+\log m)$, where $k$ and $m$ represent the numbers of paths and user preference types, respectively. When the system can support pricing/billing for users, we further propose a new monetary mechanism called Adaptive Side-Payment (ASP), which adaptively charges and rewards users based on their chosen paths. 
Our ASP achieves a better PoA of $\frac{1}{2}$ with an even lower complexity of $O(k\log k)$. Finally, our theoretical findings are well corroborated by our experimental results using a real-world dataset. 
\end{abstract}
%


\begin{IEEEkeywords}
Social information sharing, point-of-interest (PoI) crowdsourcing, routing game, incentive mechanism, price of anarchy
\end{IEEEkeywords}
\maketitle
\section{Introduction}
\IEEEPARstart{T}{he} success of many location-based services is largely driven by the availability of abundant point-of-interest (PoI) information \cite{chang2021mobility,vasserman2015implementing,cheng2023influence}. Leveraging ubiquitous mobile social platforms \cite{nikou2014ubiquitous,hossain2020ubiquitous} and seamless Vehicle-to-Everything (V2X) networks \cite{garcia2021tutorial,chen2017vehicle}, 
one promising approach for gathering PoI data is to crowdsource mobile users and vehicles as they collect PoIs during their regular travels, without requiring much additional effort. For example, Waze and Google Maps encourage drivers to report traffic conditions along their routes, enriching the data pool for live mapping \cite{vasserman2015implementing}. TripAdvisor and Yelp rely on user contributions to gather shopping promotions and ratings for restaurants and hotels \cite{Tripadvisor, yelp}. These crowdsourcing platforms typically aggregate PoI information from individual users and make it broadly accessible, thereby enhancing users' future travel and service experiences. 
\begin{figure}[t]
    \centering
    \includegraphics[width=8cm]{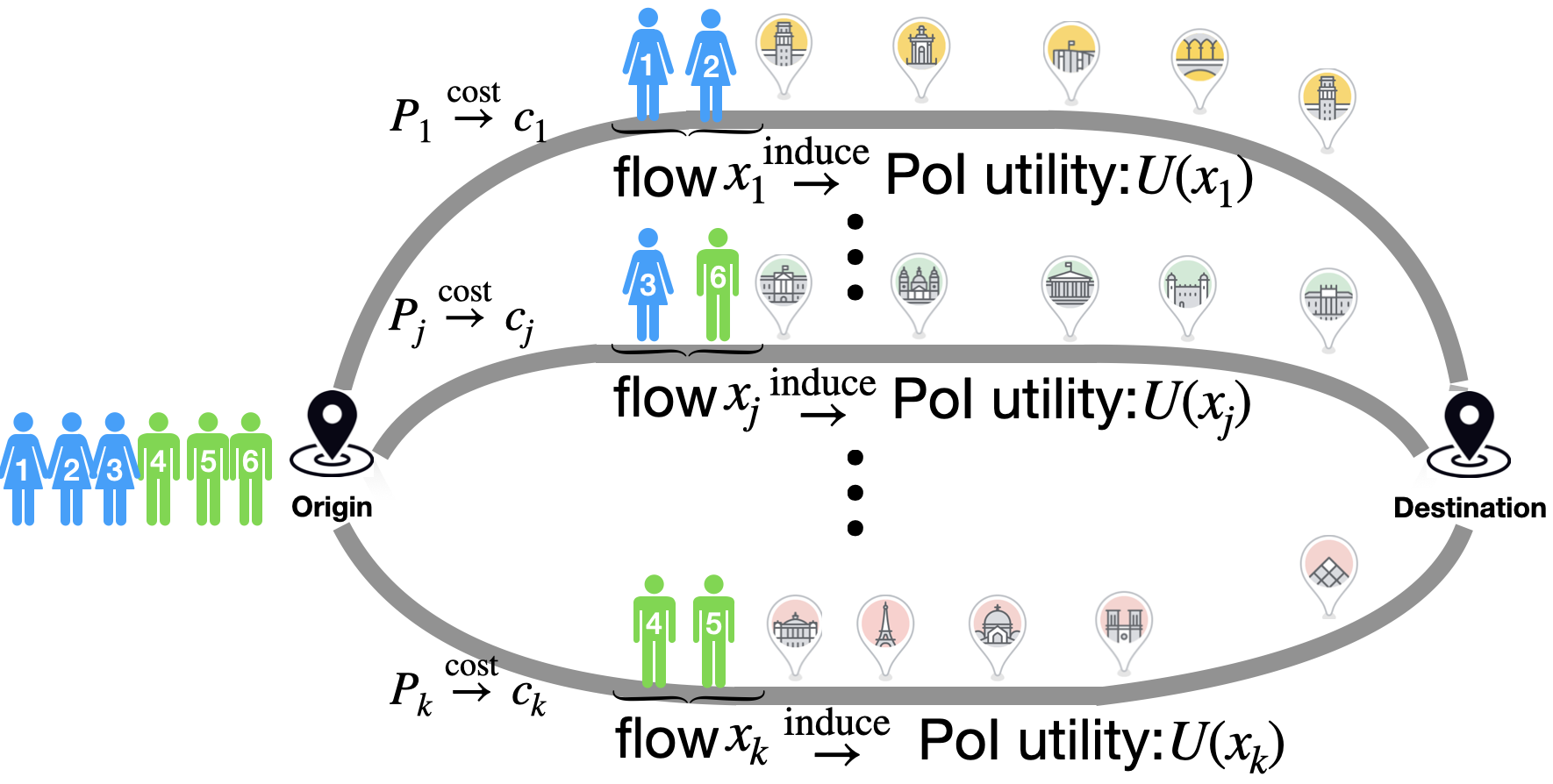}
    \caption{Network configuration in the routing game for social information sharing.
This illustrative example depicts a $k$-path routing game with $m=2$ user preference types, represented by blue and green. Users of the blue type prefer PoIs located along the upper paths, whereas users of the green type favor PoIs on the lower paths.}
    \label{fig01_multi}
\end{figure}

To motivate this study, let us consider a typical scenario of social information sharing in location-based services \cite{cominetti2022approximation,li2023congestion}: a large number of crowdsourced users select from multiple paths connecting a common origin and destination,  as illustrated in Fig.~\ref{fig01_multi} and further elaborated in the model description in Section~\ref{model_sec}. Each path contains abundant PoI information relevant to the public. However, an individual user can collect only a small subset of PoIs along her selected path, which is trivial compared to the massive PoIs collected collectively. These users, who may have heterogeneous preferences for PoIs located on different paths, are rational and tend to select routes that minimize their travel costs
\cite{
li2017dynamic}. This tendency inadvertently limits the diversity of the aggregated PoI information. Therefore, this paper aims to develop effective mechanisms to regulate user routing decisions, thereby enriching the diversity of collected PoIs and improving the overall social welfare derived from the 
PoI aggregation (i.e., the combined dataset of points of interest gathered from all users’ individual collections).

The scenario described above is closely related to routing games (e.g., \cite{macault2022social,cominetti2019price}) and algorithmic mechanism design (e.g., \cite{kim2021simple,li2018customer}). Since each user’s PoI contribution is infinitesimal compared to the PoI aggregation, we frame the problem as a non-atomic routing game, as similarly formulated in \cite{macault2022social}. 
%
Related works on routing and congestion games \cite{zhu2022information,macault2022social,li2024human,wu2021value,li2024optimize,giordano2023note,li2025analyze} focus on the equilibrium outcomes of users' selfish routing decisions, studying either informational mechanisms \cite{tavafoghi2017informational,farhadi2022dynamic} or pricing/monetary mechanisms \cite{kordonis2019mechanisms,kim2021simple,ghafelebashi2023congestion}. These studies typically address users' crowding interactions under congestion with negative network externalities \cite{tavafoghi2017informational,li2017dynamic}. In contrast, our routing game highlights the positive network externalities arising from social information sharing among crowdsourced users across multiple routing paths. Moreover, existing informational regulations in traditional routing or congestion games often involve systems with complete knowledge that intentionally conceal travel costs from users \cite{li2023congestion,mansour2022bayesian}. As a result, users cannot accurately determine the shortest path and tend to rely on system recommendations. Differently, our new routing game ensures transparent travel costs for users while preserving the privacy of users' path preferences from the system. This transparency allows users to easily select their preferred paths but significantly reduces PoI diversity, thereby making system intervention more challenging. Although a study \cite{li2017dynamic} explores social information sharing, it is confined to a simple two-path network and heavily relies on the assumption that the system possesses complete knowledge of each user's path preference. Hence, it is essential to design new mechanisms with guaranteed performance for the multi-path routing game that exhibits positive network externalities.

In addition to congestion and routing games, recent studies on cost-sharing games \cite{gkatzelis2016optimal, christodoulou2024resource} investigate scenarios 
where selfish users choose from a set of resources (e.g., paths or machines). In these games, the cost of each resource depends on the number of users selecting that resource and is subsequently shared among them \cite{gkatzelis2016optimal,georgoulaki2021equilibrium}. The policy-maker in cost-sharing games often focuses on determining the proportion of the resource cost that each user is responsible for \cite{gkatzelis2021resource,christodoulou2020resource}. In contrast, our routing game aims to optimize the flow distribution of users across multiple paths (i.e., resources) to maximize social utility. Since social utility in our game depends on the distribution of different types of users across various paths, transforming our game into a cost-sharing game is not feasible.

In algorithmic game theory and mechanism design, the price of anarchy (PoA) \cite{papadimitriou2001algorithms} is commonly used as a standard metric for evaluating the performances of mechanisms (e.g., \cite{li2017dynamic,christodoulou2024resource}). In the context of our game for social information sharing, the PoA of a mechanism ranges from 0 (indicating 
zero efficiency) to 1 (indicating optimality). The formal definition of the PoA used in this paper is provided later in Definition~\ref{def_poa} in Section~\ref{model_sec}.

\textit{Our main results are summarized below and in Table~\ref{mainresults_table}.} 

\begin{table}[t]
\caption{Main results of this paper, with $k$ indicating the number of paths and $m$ denoting the number of users' preference types 
}
\centering
\begin{tabular}{c|c|c|c}
\hline
 &\makecell[c]{No Mech.}&  \makecell[c]{AIR} &\makecell[c]{ASP}\\
\hline\hline
\makecell[c]{Efficiency}&$PoA=0$&$PoA=\frac{1}{4}$&${PoA=\frac{1}{2}}$\\
\hline
Complexity&{\textit{N.A.}} &
$O(k\log k+\log m)$&$O(k\log k)$\\
\hline
Property&\textit{N.A.}&IR, IC, and BB&IR, IC, and BB\\
\hline
Location&Lemma \ref{thm_no_incentive_basic} &Theorem~\ref{air_multipath_poa}&Theorem \ref{new_poa_for_spm}\\
\hline
\end{tabular}
\label{mainresults_table}
\end{table}
To begin with, we demonstrate through PoA analysis that in the absence of any mechanism design, users' selfish routing choices can lead to remarkably imbalanced flow distribution on the paths, resulting in very poor social information sharing with $PoA=0$. To remedy such a huge efficiency loss, we are motivated to design effective incentive mechanisms to regulate users' routing behaviors.  To ensure practical feasibility, all of our mechanisms adhere to key properties, including 
individual rationality (IR), incentive compatibility (IC), and budget balance (BB).

Our first mechanism, named adaptive information restriction (AIR), limits users' access to the PoI aggregation by imposing certain fractions as an indirect penalty.
By meticulously adapting its penalty fractions to users' actual flow on different paths, our AIR mechanism successfully achieves a significant PoA of
$\frac{1}{4}$ in polynomial time
$O(k\log k+\log m)$, which is for the routing game involving $k$ paths and $m$ types of users' path preferences.  To establish this PoA result, we recursively decompose the outer multi-path routing game into nested sub-routing games until reaching the base case of a two-path routing game, while ensuring that the equilibrium flow in each sub-game consistently aligns with that of its outer routing game. We believe this novel analysis approach is of independent interest for broader mechanism design and PoA analysis. This analysis technique can be of independent interest for broader mechanism design and analysis.

Our second mechanism, Adaptive Side Payment (ASP), is a monetary mechanism designed for systems that support user payment transactions. ASP employs a two-tier side-payment structure that adaptively charges and rewards users according to their path selections. In its first tier, side payments are carefully crafted and imposed on each path to calibrate its effective cost, ensuring alignment between these costs and users’ path preferences.
Its second tier further casts the multi-path game into a fundamental two-path game, using two pivotal thresholds determined by our proposed mathematical program. Leveraging its two-tier side-payment design, our ASP achieves an improved Price of Anarchy (PoA) of $\frac{1}{2}$, along with a reduced computational complexity of $O(k \log k)$. Finally, we validate our theoretical results through experiments on a real-world dataset. 

The remainder of this paper is organized as follows: Section~\ref{model_sec} describes our routing game for social information sharing and presents preliminary results. Our mechanism designs and PoA analyses for AIR and ASP mechanisms are presented in Sections~\ref{sec_IRM} and~\ref{sec_spm}, respectively. 
In Section~\ref{sec_simulation}, we present experimental results using a real-world dataset to validate the practical performance of our mechanisms.
Finally, Section~\ref{sec_conclusion} concludes this work. 
%
%
%
\section{The Model and Preliminaries}\label{model_sec}
We consider a non-atomic routing game with positive network externalities arising from social PoI sharing in mobile crowdsourcing platforms,  as depicted in Fig.~\ref{fig01_multi}. In the game, a normalized unit mass of rational users independently chooses one of the $k \geq 2$ parallel paths, $\mathcal{P} = {P_1, \ldots, P_k}$, to travel from an origin $O$ to a destination $D$ based on their individual preferences.\footnote{We reasonably assume that everyone in the game opts to travel from $O$ to $D$, which can be easily ensured by offering base incentives to encourage participation.} Users collect PoIs along their chosen paths to share as a public good for enhancing social welfare. 

Concretely, each path $P_j\in \mathcal{P}$ is associated with a publicly known\footnote{Travel costs in the game, including distance and fuel consumption, can be accurately predicted by modern navigation apps such as Waze and Google Maps \cite{derrow2021eta}.} travel cost $c_j$, where the smallest one is normalized to zero, i.e.,  $\min\limits_{j\in\{1,...,k\}}\{c_j\}=0$.
Denote $U(x)$ as the function quantifying the amount of distinct PoIs (e.g., landmarks, shopping promotions, restaurant ratings, traffic conditions) collected by a flow $x\in [0,1]$ of users on the same path. Due to overlap in users’ PoI collections, the number of distinct PoIs gathered from a given path generally increases as more users follow that path, but at a decreasing rate. Hence, function $U(x)$ is assumed to be monotonically increasing and concave in $x$. Our subsequent mechanisms and their PoA guarantees hold for any function $U(x)$ that is concave and monotonically increasing. A concrete form of $U(x)$ for implementation will be discussed later in Section~\ref{sec_simulation}. In the sequel, we use $a\rightarrow b$ to denote that the value of $a$ converges to $b$.

Since the additional costs users incur from route deviations are typically negligible compared to the long-term benefits of the PoI aggregation from all users' individual collections), we assume
$\frac{c_j}{U(1)}\rightarrow 0$ for each $P_j\in \mathcal{P}$.  As we will show later, this condition $\frac{c_1}{U(1)}\rightarrow 0$ induces the worst-case efficiency loss for the system, presenting a challenging scenario for mechanism design to remedy a severely degraded PoA.

To account for the varying significance of PoI information across paths, users are supposed to have heterogeneous preferences over the PoIs on different paths, which remain unknown to the mechanism designer to preserve user privacy \footnote{Nevertheless, the mechanisms we propose later do not rely on any knowledge of these preferences.}. Accordingly, we model the user population as comprising $m$ distinct types, indexed by $[m] \triangleq \{1, \ldots, m\}$, each defined by a unique preference profile. Let $\eta_i$ denote the proportion of users of type $i$, thereby satisfying
$\sum_{i=1}^m \eta_i = 1$. Users of the same type $i\in [m]$ share a common preference profile $\bm{W}_i=(w_{i1},...,w_{ik})$, where $w_{ij}$ gives a type-$i$ user's preference weight over PoIs on path $P_j \in \mathcal{P}$ and satisfy $\sum_{j=1}^k w_{ij}=1$. We compile the preference profiles of the $m$ user types into the set $\mathcal{W} = \{{\bm W}_1,...,{\bm W}_m\}$. Table~2 summarizes the key notation used throughout the paper.
\begin{table}[t]
\centering
    \caption{Key notation summary}
\begin{tabular}{c|p{5cm}}
    \hline
   Key notation & Description\\
    \hline\hline
 $\mathcal{P}\triangleq\{P_1, ..., P_k\}$& Set of $k$ parallel paths in the network\\
 \hline
 $c_j$ & Travel cost for path $P_j\in\mathcal{P}$\\
 \hline
 $[m]\triangleq\{1,...,m\}$& Set of user types\\
 \hline
 $\eta_i$& The proportion of users of type $i\in[m]$ within the total user population\\
 \hline
$\bm{W}_i=(w_{i1},...,w_{ik})$ & preference profile of type-$i$ users over $k$ paths; $w_{ij}$ is the preference weight for path $j$, with $\sum_{j=1}^k w_{ij} = 1$\\
\hline
$(x_1,\ldots,x_k)$ & User flow vector across the $k$ paths with $x_j$ giving user flow on path $j$ and $\sum\limits_{i=1}^k x_i=1$\\
\hline
$U(x)$ &PoI amount collected by a user flow $x$\\
\hline
$\mathcal{U}_i(x_1,...,x_k)$ in (\ref{utility_multi_path}) & Utility of type-$i$ users given user flows  $(x_1,\ldots,x_k)$\\
 \hline
 $\Theta_{i}(P_j,x_1,...,x_k)$ in (\ref{payoff_multi_path})& Payoff of type-$i$ users on path $j$ given user flows $(x_1, \ldots, x_k)$\\
 \hline
 $SW(x_1,...,x_k)$ in (\ref{sw_multi_path})& Social welfare resulting from flow vector $(x_1, \ldots, x_k)$\\
 \hline
    \end{tabular}

    \label{notationinouralgorithm}
\end{table}

Given the cost vector $\bm{c}=\{c_1,...,c_k\}$ and the function $U(x)$, each user in the game selfishly selects the path that maximizes her payoff, which is defined as follows. 
Let $\mathbf{x} = (x_1, \ldots, x_k)$ denote the flow vector of users, where $x_j \geq 0$ represents the fraction of users choosing path $P_j$, with $\sum{j=1}^k x_j = 1$. The utility for a type-$i$ user, with full access to the PoI aggregation collected from all users, is given by:
\begin{equation}\label{utility_multi_path}
\mathcal{U}_i(x_1,...,x_k) =\sum\limits_{j=1}^{k} w_{ij}U(x_j).
\end{equation}
The payoff for a type-$i$ user selecting path $P_j$, denoted as $\Theta_{i}(P_j,x_1,...,x_k)$, is defined as her utility from the PoI aggregation minus her travel cost as given below.
\begin{equation}\label{payoff_multi_path}
   \Theta_{i}(P_j,x_1,...,x_k) =\mathcal{U}_i(x_1,...,x_k)-c_j.
\end{equation}
Thus, for the normalized unit mass of users, the \textit{social welfare} $SW(x_1, \ldots, x_k)$ corresponding to the flow vector $(x_1, \ldots, x_k)$ is defined as the total sum of all users’ payoffs:
\begin{equation}\label{sw_multi_path} 
\begin{split}
SW(x_1,...,x_k)
&=\sum\limits_{i=1}^m \eta_i \mathcal{U}_i(x_1,...,x_k)-\sum\limits_{j=1}^kx_jc_j.
\end{split}
\end{equation}
By substituting (\ref{utility_multi_path}) into (\ref{sw_multi_path}), the social welfare can be expressed as in (\ref{sw_multi_path_new}). Notably, the coefficient $\sum_{i=1}^m \eta_i w_{ij}$ of $U(x_j)$ in the first summation term can be regarded as the users’ expected preference for path $P_j$. This insight plays a crucial role in guiding our subsequent mechanism designs. Accordingly, we highlight $\sum_{i=1}^m \eta_i w_{ij}$ and refer to it as the \textit{social utility weight} associated with $U(x_j)$.
\begin{equation}\label{sw_multi_path_new}
SW(x_1,...,x_k)
=\sum\limits_{j=1}^k(\sum\limits_{i=1}^m\eta_i w_{ij})U(x_j)-\sum\limits_{j=1}^kx_jc_j.
\end{equation}
\begin{remark}\label{remark_01}
Beyond PoI-based mobile crowdsourcing, the above multi-path routing game may accommodate broader application scenarios by treating each path as a prescribed category of tasks that the platform aims for users to complete to enhance social welfare and considering each PoI along a path as an individual task within that category. For example, Google Crowdsource \cite{Googlecrowdsourcea} allows users to select from various task categories (such as image labeling, translation validation, handwriting recognition, and map transcription) and complete micro-tasks within their chosen category. The collected data then serves as input for training models that enhance Google’s AI and language tools as a public good.
\end{remark}

In this paper, we focus on the equilibrium flow of users, as formally defined below and commonly studied in the literature \cite{roughgarden2002bad,richman2007topological,altman2001routing,li2017dynamic}.
\begin{definition}[Equilibrium flow]\label{NE_def}
Given the input parameters $(\bm{\eta}, \mathcal{W}, {\bm c}, U(\cdot))$ of the routing game, a feasible flow $(x_1,...,{x}_k)$ is called an \textit{equilibrium flow} if no user of any type $i\in[m]$ selecting path $P_j\in \mathcal{P}$ (i.e., in flow  $x_j$) can increase her payoff in (\ref{payoff_multi_path}) by unilaterally switching to a different path. We denote the set of all equilibrium flows under the given parameters as $E(\bm{\eta}, \mathcal{W}, \bm{c}, U(\cdot))$, and simply write $E$ when the context is clear. We use a hat to indicate an equilibrium flow, i.e., $(\widehat{x}_1, \ldots, \widehat{x}_k) \in E$.
\end{definition}
To evaluate the efficiency of mechanisms in the routing game, we adopt the Price of Anarchy (PoA) \cite{papadimitriou2001algorithms} as formally defined below.
\begin{definition}\label{def_poa}
The Price of Anarchy (PoA) refers to the ratio between the lowest social welfare at the worst-case equilibrium and the optimal social welfare, i.e., 
\begin{equation}\label{form_poa}
    {  PoA} \triangleq \min\limits_{(\widehat{x}_1,...,\widehat{x}_k)\in E(\bm{\eta}, \mathcal{W}, {\bm c}, U(\cdot))} \frac{SW(\widehat{x}_1,...,\widehat{x}_k)}{SW^*},
\end{equation}
where $SW^*\triangleq \max\limits_{0\leq x_1,...,x_k\leq 1, \sum_{i=1}^k x_i=1} \left\{SW(x_1,...,x_k)\right\}$.
\end{definition}
Note that the PoA in our routing game ranges from 0 to 1, where a PoA of 1 indicates optimal social welfare and a PoA of 0 signifies complete inefficiency. Without any incentive mechanism, we show in the following lemma that users' selfish path selections can lead to an extremely poor PoA.
\begin{lemma}\label{thm_no_incentive_basic}
In the routing game for social information sharing, the PoA collapses to zero when no mechanism is implemented to regulate users’ selfish path selections.
\end{lemma}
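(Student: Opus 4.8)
The plan is to first pin down the structure of any equilibrium flow, and then exhibit a single instance satisfying all the model assumptions (in particular $\min_j c_j=0$, $c_j/U(1)\to 0$, and $U(0)=0$) on which the ratio in Definition~\ref{def_poa} is exactly $0$. The crucial first observation is that the game is \emph{non-atomic}: an individual type-$i$ user is infinitesimal, so her unilateral switch from $P_j$ to $P_{j'}$ leaves the aggregate flow $(x_1,\dots,x_k)$ unchanged, and hence leaves her utility $\mathcal{U}_i(x_1,\dots,x_k)=\sum_{l} w_{il}U(x_l)$ unchanged. By (\ref{payoff_multi_path}) her payoff therefore differs across paths only through the travel-cost term $-c_j$. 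Consequently every user strictly prefers a minimum-cost path, and in any equilibrium flow all mass concentrates on the set of minimum-cost paths; if exactly one path attains the minimum cost, the equilibrium is unique and places the entire unit mass on it.

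Next I would build the worst-case instance exploiting this decoupling of utility from route choice, which is precisely the manifestation of the positive externality: users would like the \emph{valuable} PoIs to be collected by somebody, yet nobody is willing to personally bear the cost of traversing the path that carries them. Concretely, take $k=2$ (the construction embeds trivially into any $k\ge 2$ by leaving the extra paths costly and unused), set $c_1=0$ and $c_2=\delta>0$ with $\delta/U(1)\to 0$, so that $P_1$ is the unique cheapest path. Choose every type's preference profile so that $P_1$ carries zero value, i.e.\ $w_{i1}=0$ and $w_{i2}=1$ for all $i$; then the social utility weights appearing in (\ref{sw_multi_path_new}) are $\sum_{i}\eta_i w_{i1}=0$ and $\sum_{i}\eta_i w_{i2}=1$.

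I would then evaluate both ends of the ratio. By the first step the unique equilibrium is $(\widehat{x}_1,\widehat{x}_2)=(1,0)$, so using $U(0)=0$ and $c_1=0$, (\ref{sw_multi_path_new}) gives $SW(1,0)=0\cdot U(1)+1\cdot U(0)-c_1=0$. Moving the entire mass onto the valued path instead, $(x_1,x_2)=(0,1)$, yields $SW(0,1)=0\cdot U(0)+1\cdot U(1)-\delta=U(1)-\delta$, which is strictly positive and tends to $U(1)>0$ since $\delta/U(1)\to 0$ and $U$ is increasing with $U(0)=0$; hence $SW^*\ge U(1)-\delta>0$. Substituting into (\ref{form_poa}) gives $PoA=SW(1,0)/SW^*=0$.

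The only real subtlety, and the step deserving the most care, is the equilibrium characterization: one must justify rigorously that in the non-atomic limit a single user's routing decision does not perturb the aggregate flow and therefore does not alter the shared PoI utility, so that selfish behavior reduces purely to cost minimization. Everything else is a short computation once $U(0)=0$ and the normalization $\sum_j w_{ij}=1$ are invoked. It is worth emphasizing that the collapse is driven by the positive externality and not by large costs: with $c_j/U(1)\to 0$ the travel costs are negligible, yet the free-riding on the shared public good still annihilates the realized welfare at equilibrium.
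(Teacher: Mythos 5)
Your proof is correct and follows essentially the same route as the paper's: non-atomicity makes each user's shared utility $\mathcal{U}_i$ independent of her own path choice, so selfish routing collapses the entire unit mass onto the unique cheapest path, and an instance in which that path carries zero social utility weight (i.e., $\sum_{i}\eta_i w_{i1}=0$) drives the equilibrium welfare to $U(0)=0$ while the optimum stays near $U(1)$. The only implicit ingredient is the normalization $U(0)=0$, which, while not listed among the formal assumptions on $U$, is consistent with its semantics (a zero flow collects zero PoIs) and with the concrete form used in Section~\ref{sec_simulation}.
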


This negative result highlights the need for effective mechanisms to address the substantial efficiency loss in social PoI sharing. To ensure practical viability, we particularly focus on mechanisms that satisfy the following desirable properties.

\begin{definition}\label{def_property}
We consider the following three desirable properties, commonly studied in the mechanism design literature \cite{bergemann2005robust,li2017dynamic,balseiro2019dynamic}:
\begin{itemize}
    \item \textbf{Individual Rationality (IR)}: Each user receives a non-negative payoff from participating in the mechanism.
    \item \textbf{Incentive Compatibility (IC)}: Each user of type $i \in [m]$ is incentivized to choose a path that aligns with her true preference profile $\bm{W}_i$.
    \item \textbf{Budget Balance (BB)}: Any charges (if imposed) collected from users are fully redistributed among users within the system.
\end{itemize}
\end{definition}
%

In the sequel,  we begin by examining non-monetary mechanisms in Section~\ref{sec_IRM}, emphasizing their straightforward implementation. Then, in Section~\ref{sec_spm}, we address monetary/pricing mechanisms suited for systems when user transactions are supported. Through our subsequent PoA analysis, we demonstrate that appropriate pricing can further improve social welfare compared to non-monetary mechanisms.
%
%
%
\section{New Non-Monetary Mechanism: Adaptive Information Restriction}\label{sec_IRM}
In this section, we present an Adaptive Information Restriction (AIR) mechanism, as outlined in Mechanism~\ref{m_irm}, which regulates users’ access to the PoI aggregation via a set of parameters $\gamma$, referred to as \textit{penalty fractions}. At the high level, these fine-grained fractions dynamically control the extent to which users selecting certain paths can access the PoI aggregation, thereby serving as an indirect penalty to regulate users' path selections for better social welfare. Our subsequent Algorithms~\ref{incomplete_air_mechanism} and~\ref{general_incomplete_air_mechanism} tailor the penalty fractions accordingly.
%

\begin{mechanism}[\textbf{Adaptive Information Restriction (AIR)}]\label{m_irm}
Under the AIR mechanism, any user choosing path $P_j$ will eventually receive only a proportion $\gamma_j$ of the overall PoI aggregation described in (\ref{utility_multi_path}), as the remaining proportion $(1-\gamma_j)$ is excluded from her access as an indirect penalty.
\end{mechanism}
\begin{remark}
In practice, the penalty fractions in AIR can also be interpreted as follows: the smallest penalty fraction corresponds to the baseline service level provided uniformly to all users. The difference between any given penalty fraction and this baseline reflects the additional service benefits awarded to users who select the associated path. In other words, 
users will receive upgraded service when selecting paths with higher penalty fractions. Accordingly, our AIR mechanism can be implemented on platforms such as Yelp, Google Maps, and Waze, which offer higher service levels or rights to users who contribute PoIs in specific locations.
For example, Yelp recognizes top contributors as Elite members and offers them enhanced services, including more personalized recommendations \cite{dai2018aggregation,yelp}. By contributing PoIs along specific routes or in designated regions, users in Google Maps \cite{googlemapspoints,Googlecrowdsourcea} can accumulate points that increase their contributor level, thereby unlocking enhanced services. In Waze’s map editing system~\cite{wazepoints}, users who contribute PoIs to specific roads and places can earn more points and higher rank, which in turn grants them increased access to Waze’s data pool through broader editing rights. Therefore, our penalty fraction design in AIR can offer insights to these platforms on how to determine appropriate service levels that effectively incentivize users to contribute the specific PoI data the platforms seek.
Moreover, in the broader application context of the routing game discussed in Remark~\ref{remark_01}, our AIR mechanism can support data collection for large-scale model training \cite{zhang2024survey} by granting upgraded services to users who contribute data in specific domains or task categories prioritized by the platform.
\end{remark}

Given a flow vector $(x_1,\ldots, x_k)$, the payoff of a type-$i \in [m]$ user selecting path $P_j \in \mathcal{P}$ is reduced by the AIR mechanism from (\ref{payoff_multi_path}) to the following:
\begin{equation} \label{new_payoff_air}\Theta_i(P_j,x_1,...,x_k)=\gamma_j\mathcal{U}_i(x_1,...,x_k)-c_j.
\end{equation}

We now detail our design for penalty fractions in the AIR mechanism, aiming to ensure that the resulting equilibrium flows achieve high social welfare. Motivated by Lemma~\ref{thm_no_incentive_basic} and the insight that balanced flows across all paths are more resilient to worst-case scenarios, our design goal is to promote such balance through the choice of penalty fractions. To begin with, we consider the design in a fundamental two-path routing game, which lays the groundwork for the design in more complex multi-path scenarios.
%
\subsection{Warm-up: Adaptive Access Fraction Design for Two-path Routing Games}\label{sec_warmup_airr}
We first design for a simplified two-path routing game that involves only a single user type, whose preference weights are $w_0\leq 1$ and ($1-w_0$) for the PoI collections from paths $P_1$ and $P_2$, respectively. Here, path $P_1$ has a high cost $c_1$, while path $P_2$ has a low cost normalized to zero ($c_2=0$). 

Our design is presented in (\ref{fractions_rsm_homo}), where the high-level idea is to gradually grant greater access to users who follow higher-cost paths with lower flows. Concretely, to encourage balanced flows for when $w_0< 1$, our penalty fractions for AIR are dynamically adjusted by the first two cases of (\ref{fractions_rsm_homo}), incentivizing users to gradually shift towards the path they value less.
In the special case where users derive value exclusively from the PoI collection on the high-cost path (i.e., $w_0 = 1$), we assign this path a slightly higher penalty fraction than the low-cost path, see the last two cases of (\ref{fractions_rsm_homo}). This nuanced adjustment, proportional to the cost difference $|c_1-c_2|$ relative to the total PoI collection $U(1)$, is instrumental in ensuring a favorable PoA.
\begin{equation}\label{fractions_rsm_homo}
\gamma_{j}(x_1,x_2)=\left\{\begin{matrix}
 \frac{x_2}{x_1+x_2},& {\rm if}\;w_0<1{\rm\;and\;}j=1,\\ 
\frac{x_1}{x_1+x_2}, &{\rm if}\;w_0<1{\rm\;and\;}j=2, \\ 
 \frac{1}{2}+\frac{c_1}{2U(1)},&{\rm if}\;w_0=1{\rm\;and\;}j=1, \\ 
\frac{1}{2}-\frac{c_1}{2U(1)}, &{\rm if}\;w_0=1{\rm\;and\;}j=2. 
\end{matrix}\right.
\end{equation}

We proceed to examine the PoA under our AIR mechanism with penalty fractions given in (\ref{fractions_rsm_homo}). In the two-path routing game with a single user type, any equilibrium flow $(\widehat{x}_1,\widehat{x}_2)$ renders users indifferent between the two paths, i.e.,
\begin{equation}\label{ef_homogeneous}
\Theta(P_1,\widehat{x}_1,\widehat{x}_2)=\Theta(P_2,\widehat{x}_1,\widehat{x}_2).
\end{equation}
By substituting (\ref{fractions_rsm_homo}) into (\ref{new_payoff_air}) and then into (\ref{ef_homogeneous}), we establish in Lemma~\ref{theorem_irm_homo_poa} that the equilibrium flow in AIR converges to a balanced distribution of $(\frac{1}{2},\frac{1}{2})$ across the two paths, resulting in a PoA of $\frac{1}{4}$. This result provides a foundational step toward generalizing our design of (\ref{fractions_rsm_homo}) to multi-path routing games as we will show later in Section~\ref{section_matryosha}.

\begin{lemma}\label{theorem_irm_homo_poa}
In a two-path routing game with a single user type, the AIR mechanism using penalty fractions in (\ref{fractions_rsm_homo}) satisfies IR, IC, and BB, and achieves a PoA of $\frac{1}{4}$ relative to the social optimum. 
\end{lemma}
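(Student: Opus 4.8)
The plan is to split the analysis into the two regimes of the penalty fractions in (\ref{fractions_rsm_homo}), namely $w_0<1$ and $w_0=1$, and in each regime to (i) solve the indifference condition (\ref{ef_homogeneous}) for the equilibrium flow, (ii) check IR, IC, and BB, and (iii) bound the resulting social welfare against the first-best optimum. Throughout I use $x_1+x_2=1$, the normalization $c_2=0$, and the standing assumption $\frac{c_1}{U(1)}\to 0$. For $w_0<1$, substituting the first two cases of (\ref{fractions_rsm_homo}) (so that $\gamma_1=x_2$ and $\gamma_2=x_1$) into the payoff (\ref{new_payoff_air}) and then into (\ref{ef_homogeneous}) collapses the indifference condition to $(\widehat{x}_2-\widehat{x}_1)\,\mathcal{U}(\widehat{x}_1,\widehat{x}_2)=c_1$, where $\mathcal{U}=w_0U(x_1)+(1-w_0)U(x_2)$. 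Since $\mathcal{U}$ is continuous and strictly positive along the feasible segment $x_1+x_2=1$ (hence bounded below by a positive constant for each fixed $w_0<1$), the relation forces $\widehat{x}_2-\widehat{x}_1=\frac{c_1}{\mathcal{U}}\to 0$ under $\frac{c_1}{U(1)}\to 0$, so every equilibrium converges to the balanced flow $(\frac{1}{2},\frac{1}{2})$. For $w_0=1$, the last two cases of (\ref{fractions_rsm_homo}) reduce the indifference condition to $\frac{c_1}{U(1)}U(\widehat{x}_1)=c_1$, i.e. $U(\widehat{x}_1)=U(1)$, whose solution is the vertex $\widehat{x}=(1,0)$ that concentrates all users on the only path they value.

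Next I would verify the three properties of Definition~\ref{def_property}. BB is immediate because AIR transfers no money: its penalty is purely an access restriction, so there is nothing to redistribute. For IR, at the $w_0<1$ equilibrium the indifference condition makes both paths yield the common payoff $\widehat{x}_1\mathcal{U}\ge 0$, and at the $w_0=1$ equilibrium the common payoff is $\frac{1}{2}U(1)-\frac{c_1}{2}\ge 0$ for small $c_1$; in both cases payoffs are non-negative. For IC, I would note that the penalty fractions in (\ref{fractions_rsm_homo}) depend only on the realized flows and on public data, never on a reported preference, so the mechanism is detail-free with respect to preferences and no user can gain by misrepresentation; at equilibrium each user's chosen path is a best response.

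For the PoA I would evaluate the de facto, access-restricted social welfare (the sum of the penalized payoffs in (\ref{new_payoff_air})) at the equilibrium, and compare it with the first-best optimum $SW^{*}$ computed from the unrestricted welfare (\ref{sw_multi_path_new}). At the balanced equilibrium $\gamma_1=\gamma_2\to\frac{1}{2}$ and $\mathcal{U}\to U(\frac{1}{2})$, so the de facto welfare tends to $\frac{1}{2}U(\frac{1}{2})$, whereas $SW^{*}\le U(1)$ because each $U(x_j)\le U(1)$ and the preference weights sum to one. Using concavity together with $U(0)\ge 0$ to write $U(\frac{1}{2})\ge\frac{1}{2}U(1)$ then gives $PoA\ge\frac{\frac{1}{2}U(\frac{1}{2})}{U(1)}\ge\frac{1}{4}$, while a linear $U$ with $w_0\to 1$ (or $w_0\to 0$) drives both inequalities to equality and makes $PoA\to\frac{1}{4}$; the $w_0=1$ branch instead yields $PoA=\frac{1}{2}$ and is therefore not binding, so the worst case is exactly $\frac{1}{4}$.

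I expect the main obstacle to be the equilibrium step: rigorously showing convergence to $(\frac{1}{2},\frac{1}{2})$ under $\frac{c_1}{U(1)}\to 0$ requires controlling the lower bound on $\mathcal{U}$, which degrades as $w_0\to 1$, so the order of the two limits must be handled with care, and it also requires excluding spurious imbalanced equilibria. A secondary but essential subtlety is being explicit that the equilibrium welfare is the access-restricted one while the benchmark $SW^{*}$ is the unrestricted first-best; it is precisely the factor-$\frac{1}{2}$ information loss at the balanced flow, compounded with the factor-$\frac{1}{2}$ concavity gap, that produces $\frac{1}{4}$ rather than the $\frac{1}{2}$ one would obtain from a money-preserving mechanism.
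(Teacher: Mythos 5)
Your proposal is correct and follows essentially the same route the paper takes: substitute the fractions of (\ref{fractions_rsm_homo}) into (\ref{new_payoff_air}) and the indifference condition (\ref{ef_homogeneous}), obtain the balanced limit flow $(\frac{1}{2},\frac{1}{2})$ for $w_0<1$ (resp.\ $(1,0)$ for $w_0=1$), and compare the access-restricted equilibrium welfare $\frac{1}{2}U(\frac{1}{2})\geq \frac{1}{4}U(1)$ against $SW^*\leq U(1)$, with the linear-$U$, $w_0\rightarrow 1$ instance certifying tightness and the two factors of $\frac{1}{2}$ (information restriction and concavity) correctly identified. The one step you flag but leave open --- excluding the spurious imbalanced root of $(1-2x)\,\mathcal{U}(x,1-x)=c_1$ when $w_0$ is within $c_1/U(1)$ of $1$ --- is closed exactly by the concavity dichotomy the paper records in Lemma~\ref{lemma_incomplete_air_02}: the left-hand side is concave in $x\in[0,\frac{1}{2}]$ with value $(1-w_0)U(1)$ at $x=0$, so for any fixed $w_0<1$ the standing limit $\frac{c_1}{U(1)}\rightarrow 0$ eventually gives $(1-w_0)U(1)>c_1$ and hence a unique interior solution $\widetilde{x}\rightarrow \frac{1}{2}^-$, which is the order-of-limits resolution you anticipated.
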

%

We further investigate a two-path routing game with two distinct user types.  Type~1 users, constituting a fraction $\eta$ of population, and type-2 users, making up the remaining proportion $1-\eta$, have preference profiles $(w_1,1-w_1)$ and $(w_2,1-w_2)$ over the PoI collections on paths $(P_1, P_2)$, respectively. \textit{W.l.o.g.}, suppose $\eta > 1 - \eta$, i.e., $\eta > \frac{1}{2}$. Based on (\ref{ef_homogeneous}), any equilibrium flow $(\widehat{x}_1,1-\widehat{x}_1)$ under AIR, which makes type-$1$ users indifferent between the two paths, must satisfy: 
\begin{equation}\label{condition_for_incomplete_air}
    (1-2\widehat{x})[w_1U(\widehat{x})+(1-w_1)U(1-\widehat{x})]=c_1.
\end{equation}
To analyze the solution $\widehat{x}$ to (\ref{condition_for_incomplete_air}) above, we define a new function $\mathcal{G}(x)$ as the left-hand side expression of (\ref{condition_for_incomplete_air}):
\begin{equation}\label{def_mathcalG_incomplete}
\begin{split}
\mathcal{G}(x)\triangleq(1-2x)\cdot [w_1U(x)+(1-w_1)U(1-x)]
\end{split}
\end{equation}
Since both $1-2x$ and $\mathcal{U}_1(x,1-x)= w_1U(x)+(1-w_1)U(1-x)$ in (\ref{utility_multi_path}) are non-negative and concave for $x\in[0,\frac{1}{2}]$, we can conclude the following.

\begin{lemma}\label{lemma_incomplete_air_01}
$\mathcal{G}(x)$ is concave and continuous in  $x\in[0,\frac{1}{2}]$.
\end{lemma}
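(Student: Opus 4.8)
The plan is to establish concavity and continuity of $\mathcal{G}(x)$ on $[0,\tfrac{1}{2}]$ by viewing it as a product of two well-behaved functions and invoking standard results about products of concave functions. First I would observe that $\mathcal{G}(x)=f(x)\cdot g(x)$, where $f(x)\triangleq 1-2x$ and $g(x)\triangleq w_1 U(x)+(1-w_1)U(1-x)=\mathcal{U}_1(x,1-x)$. Continuity is the easy half: since $U(\cdot)$ is concave and monotonically increasing on $[0,1]$ (hence continuous on the open interval, and continuous up to the endpoints by monotonicity and boundedness), both $f$ and $g$ are continuous on $[0,\tfrac{1}{2}]$, so their product $\mathcal{G}$ is continuous there.

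The substantive part is concavity. I would first record that on $[0,\tfrac{1}{2}]$ both factors are nonnegative: $f(x)=1-2x\geq 0$ for $x\leq\tfrac{1}{2}$, and $g(x)\geq 0$ since $U$ is increasing (hence nonnegative, as $U(0)$ can be taken as the baseline and the utility is a convex combination of nonnegative terms). Next I would verify that each factor is concave on $[0,\tfrac{1}{2}]$: $f$ is affine (hence concave), and $g$ is concave because it is a convex combination of $U(x)$ (concave in $x$) and $U(1-x)$ (concave in $x$, being the composition of the concave $U$ with the affine map $x\mapsto 1-x$). I would also note the monotonicity directions on $[0,\tfrac{1}{2}]$ that make the product-of-concaves argument go through: $f$ is nonincreasing, and I would check that $g$ is nondecreasing on $[0,\tfrac{1}{2}]$ — this follows because $g'(x)=w_1 U'(x)-(1-w_1)U'(1-x)$ and on $[0,\tfrac12]$ one has $x\leq 1-x$, so by concavity $U'(x)\geq U'(1-x)$; combined with $w_1$ being a weight this gives the needed sign (or, more robustly, I would treat the weighted case by the general lemma below without needing strict monotonicity of $g$).

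The cleanest route is to appeal to the standard fact that if $f$ and $g$ are both nonnegative and concave on an interval, and they are oppositely monotone (one nondecreasing, one nonincreasing), then their product $fg$ is concave. I would invoke this with $f$ nonincreasing and $g$ nondecreasing on $[0,\tfrac12]$. If the paper prefers a self-contained derivation, I would instead argue at the level of second derivatives where $U$ is twice differentiable, writing $\mathcal{G}''=f''g+2f'g'+fg''$; here $f''=0$, $fg''\leq 0$ since $f\geq 0$ and $g''\leq 0$, and the cross term $2f'g'=2(-2)g'(x)\leq 0$ provided $g'\geq 0$ on $[0,\tfrac12]$, yielding $\mathcal{G}''\leq 0$. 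A brief smoothing or limiting argument then extends concavity to general concave $U$ that need not be twice differentiable.

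I expect the main obstacle to be the cross term rather than the pure product-of-concaves invocation: concavity of a product of two concave functions is \emph{not} automatic and genuinely requires the opposite-monotonicity (sign) condition on $f'g'$. So the crux is establishing that $g$ is nondecreasing on $[0,\tfrac12]$ (equivalently, that the sign of $2f'g'$ works in our favor), which rests on the fact that on $[0,\tfrac12]$ the argument $x$ sits to the left of $1-x$ and concavity of $U$ forces $U'(x)\geq U'(1-x)$. Handling the endpoints and the non-differentiable case via left/right derivatives or a limiting argument is the only remaining care needed.
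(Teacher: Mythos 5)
Your proposal is actually more careful than the paper's own justification, which consists of the single observation that both factors $1-2x$ and $\mathcal{U}_1(x,1-x)$ are nonnegative and concave on $[0,\tfrac12]$. As you correctly point out, that alone does not imply concavity of the product (e.g.\ $x\cdot x=x^2$), so the opposite-monotonicity condition you supply --- $f(x)=1-2x$ nonincreasing and $g(x)=w_1U(x)+(1-w_1)U(1-x)$ nondecreasing, so that the cross term $2f'g'$ is nonpositive --- is exactly the missing ingredient. To that extent your route is the right repair of the paper's argument rather than a different one.

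However, there is a genuine gap at the step where you claim $g'\geq 0$ on $[0,\tfrac12]$. From $U'(x)\geq U'(1-x)\geq 0$ you can conclude $w_1U'(x)\geq(1-w_1)U'(1-x)$ only when $w_1\geq\tfrac12$; ``$w_1$ being a weight'' is not enough. For $w_1<\tfrac12$ the function $g$ can be strictly decreasing on $[0,\tfrac12]$ (take $w_1=0$, so $g(x)=U(1-x)$), both factors are then nonincreasing, the cross term has the wrong sign, and the conclusion itself fails: with $U(y)=y$ and $w_1=0$ one gets $\mathcal{G}(x)=(1-2x)(1-x)$, which is strictly convex on $[0,\tfrac12]$; more generally, for linear $U$, $\mathcal{G}''=4(1-2w_1)>0$ whenever $w_1<\tfrac12$. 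Your parenthetical fallback (``treat the weighted case by the general lemma below without needing strict monotonicity of $g$'') does not close this, since the general lemma you invoke is precisely the one that requires the monotonicity. So your proof is valid only under the additional hypothesis $w_1\geq\tfrac12$; this restriction is consistent with how the lemma is used downstream (the interesting regime there is $w_1$ close to $1$), but it should be stated explicitly, and as written the lemma --- and the paper's one-line proof of it --- is false for $w_1<\tfrac12$.
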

\begin{figure}[t]
    \centering
\includegraphics[width=7cm]{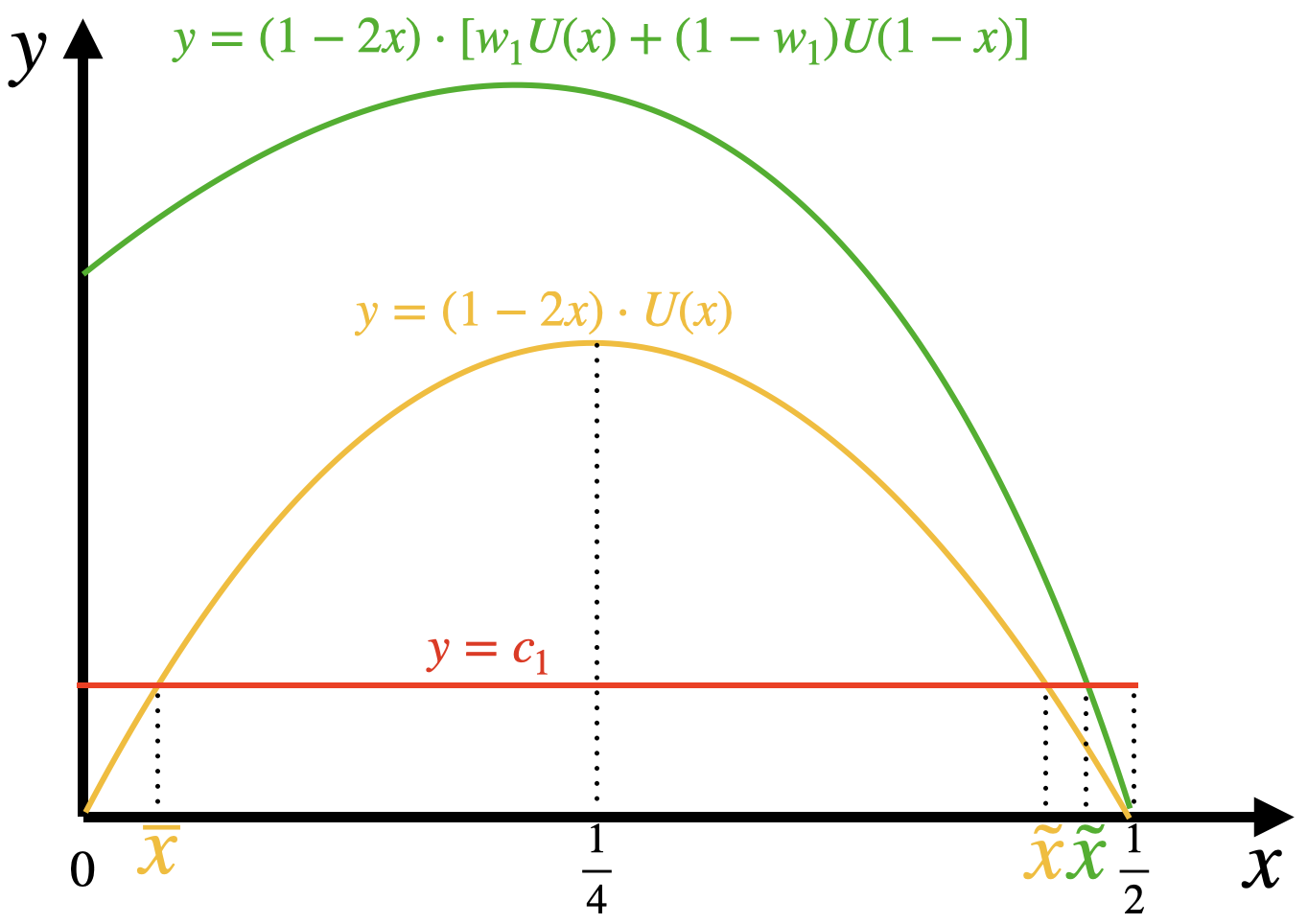}
    \caption{Illustration of solutions to $\mathcal{G}(x)=c_1$, in which the yellow curve represent the special case when $w_1=1$.}
    \label{fig04}
\end{figure}

In Fig.~\ref{fig04}, the yellow and the green curves represent the function $\mathcal{G}(x)$ for the cases $w_1=1$ and $w_1<1$, respectively. We observe that the curve $y=\mathcal{G}(x)$ always intersects the red line $y=c_1$ at some point $\widetilde{x}\in(\frac{1}{4},\frac{1}{2})$. Since $c_1$ is significantly smaller than $U(1)$ due to $\frac{c_1}{U(1)}\rightarrow 0$ as assumed in Section~\ref{model_sec}, the curve of $y=\mathcal{G}(x)$ also intersects the red line $y=c_1$ at another point $\overline{x}\in 
(0,\frac{1}{4})$. The second intersection occurs only when $w_1$ is sufficiently close to 1, such that $c_1 > (1 - w_1)U(1)$. These findings are summarized in the following lemma.
\begin{lemma}\label{lemma_incomplete_air_02}
If $w_1\geq 1-\frac{c_1}{U(1)}$, the equation in (\ref{condition_for_incomplete_air}) admits two solutions $\overline{x}\rightarrow0^+$ and $\widetilde{x}\rightarrow \frac{1}{2}^-$; otherwise, the equation in (\ref{condition_for_incomplete_air})  admits a unique solution $\widetilde{x}\rightarrow \frac{1}{2}^-$.  
\end{lemma}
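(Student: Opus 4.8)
The plan is to analyze the equation $\mathcal{G}(x)=c_1$ entirely through the shape of $\mathcal{G}$ on $[0,\tfrac{1}{2}]$, exploiting the concavity established in Lemma~\ref{lemma_incomplete_air_01} together with the intermediate value theorem. First I would record the three values that pin down the geometry: at the right endpoint $\mathcal{G}(\tfrac{1}{2})=0$, since the factor $1-2x$ vanishes there; at the left endpoint $\mathcal{G}(0)=w_1U(0)+(1-w_1)U(1)=(1-w_1)U(1)$, using the normalization $U(0)=0$; and at the interior anchor $\mathcal{G}(\tfrac{1}{4})=\tfrac{1}{2}[w_1U(\tfrac{1}{4})+(1-w_1)U(\tfrac{3}{4})]$. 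Concavity of $U$ with $U(0)=0$ gives $U(\tfrac{1}{4})\ge\tfrac{1}{4}U(1)$ and $U(\tfrac{3}{4})\ge\tfrac{3}{4}U(1)$, so $\mathcal{G}(\tfrac{1}{4})\ge\tfrac{1}{8}U(1)$; since $\tfrac{c_1}{U(1)}\to0$, this forces $\mathcal{G}(\tfrac{1}{4})>c_1$. Thus $\mathcal{G}$ rises strictly above the level $c_1$ somewhere in the interior, independently of $w_1$.

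Next I would use concavity to count roots. Because $\mathcal{G}$ is concave, its super-level set $\{x\in[0,\tfrac{1}{2}]:\mathcal{G}(x)\ge c_1\}$ is an interval, so $\mathcal{G}(x)=c_1$ has at most two roots. Existence of the root near $\tfrac{1}{2}$ is immediate and unconditional: since $\mathcal{G}(\tfrac{1}{4})>c_1>0=\mathcal{G}(\tfrac{1}{2})$, the IVT yields a root $\widetilde{x}\in(\tfrac{1}{4},\tfrac{1}{2})$. The second root is then governed by a single comparison at the left endpoint. If $\mathcal{G}(0)\le c_1$, i.e.\ $(1-w_1)U(1)\le c_1$, which is exactly $w_1\ge1-\tfrac{c_1}{U(1)}$, then $\mathcal{G}(0)\le c_1<\mathcal{G}(\tfrac{1}{4})$ and the IVT delivers a second root $\overline{x}\in[0,\tfrac{1}{4})$ distinct from $\widetilde{x}$; by the at-most-two bound these are all the roots. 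Conversely, if $w_1<1-\tfrac{c_1}{U(1)}$ then $\mathcal{G}(0)>c_1$, so the left endpoint lies inside the super-level interval and contributes no crossing, leaving $\widetilde{x}$ as the unique solution. This establishes the dichotomy of the lemma.

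Finally I would prove the two convergence claims via a continuity/positivity argument rather than explicit expansions. On the open interval $(0,\tfrac{1}{2})$ the factor $1-2x>0$ and the bracket $w_1U(x)+(1-w_1)U(1-x)>0$ (as $U(1-x)>U(0)=0$), so $\mathcal{G}>0$ strictly on $(0,\tfrac{1}{2})$ and vanishes only at the two endpoints. For $\widetilde{x}$: it stays in $(\tfrac{1}{4},\tfrac{1}{2})$ because $\mathcal{G}(\tfrac{1}{4})>c_1$ keeps $\tfrac{1}{4}$ inside the super-level interval; any subsequential limit $\xi\in[\tfrac{1}{4},\tfrac{1}{2}]$ of $\widetilde{x}$ as $\tfrac{c_1}{U(1)}\to0$ satisfies $\mathcal{G}(\xi)=\lim c_1=0$ by continuity, forcing $\xi=\tfrac{1}{2}$, hence $\widetilde{x}\to\tfrac{1}{2}^-$. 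For $\overline{x}$ in the first case: it lies in $[0,\tfrac{1}{4})$, and any subsequential limit $\zeta$ obeys $\mathcal{G}(\zeta)=0$, forcing $\zeta=0$, hence $\overline{x}\to0^+$ (the boundary situation $\mathcal{G}(0)=c_1$, giving $\overline{x}=0$ exactly, is consistent).

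The routine parts are the root counting and existence, which fall out directly from concavity and the IVT. The step I would treat as the crux is matching the threshold precisely to $w_1=1-\tfrac{c_1}{U(1)}$, which hinges on the evaluation $\mathcal{G}(0)=(1-w_1)U(1)$ and hence on the normalization $U(0)=0$, together with certifying that the near-$\tfrac{1}{2}$ root always exists by showing $\mathcal{G}(\tfrac{1}{4})>c_1$ through the standing assumption $\tfrac{c_1}{U(1)}\to0$.
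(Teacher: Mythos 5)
Your argument is correct and is essentially the paper's own: the paper likewise derives this lemma from the concavity of $\mathcal{G}$ (Lemma~\ref{lemma_incomplete_air_01}) together with the intersection picture in Fig.~\ref{fig04}, and you merely make explicit the endpoint values $\mathcal{G}(0)=(1-w_1)U(1)$ and $\mathcal{G}(\tfrac{1}{2})=0$, the interior bound $\mathcal{G}(\tfrac{1}{4})\geq\tfrac{1}{8}U(1)$, and the IVT/at-most-two-roots count. One small caveat: in your limit step the equality $\lim c_1=0$ tacitly assumes $c_1\to 0$ with $U$ fixed rather than the paper's ratio condition $\tfrac{c_1}{U(1)}\to 0$; normalizing the equation by $U(1)$ (e.g., deducing $1-2\widetilde{x}\leq 4c_1/U(1)$ and $\overline{x}\leq 2c_1/(w_1U(1))$) closes this gap.
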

\begin{algorithm}[t]
\caption{Adaptive fractions for AIR in two-path}\label{incomplete_air_mechanism}
   \SetKwInOut{Input}{Input}
        \SetKwInOut{Output}{Output}
  \Input{$w_1,w_2,(\eta,1-\eta), c_1, U(x)$.}
  \Output{penalty fractions $\gamma_{j}$.}
\eIf{(\ref{condition_for_incomplete_air}) admits a solution $\overline{x}\in[0,0.25]$ such that $1-\eta\leq \overline{x}$}
{
Apply penalty fractions in (\ref{fractions_rsm_homo_incomplete_02}).
}
{
Apply penalty fractions in (\ref{fractions_rsm_homo_incomplete_01}).
}
\end{algorithm}
Besides the desired balanced flow $\widetilde{x}$, Lemma \ref{lemma_incomplete_air_02} reveals the possibility of an imbalanced flow $\overline{x}$ that may lead to poor social welfare when type-1 users dominate the population in the game, i.e., when $\eta\geq 1-\overline{x}\geq 0.75$. To prevent this undesirable outcome, we introduce the following penalty fractions in (\ref{fractions_rsm_homo_incomplete_02}) to steer the equilibrium flow toward balance: 
\begin{equation}\label{fractions_rsm_homo_incomplete_02}
    \gamma_{j}=\left\{\begin{matrix}
 \frac{1}{2}+\frac{c_1}{2U(1)},&{\rm when}\;j=1, \\ 
\frac{1}{2}-\frac{c_1}{2U(1)}, &{\rm when}\;j=2. 
\end{matrix}\right.
\end{equation}
When $\overline{x}$ does not exist, we simply adopt the following fraction design derived from (\ref{fractions_rsm_homo}):
\begin{equation}\label{fractions_rsm_homo_incomplete_01}
    \gamma_{j}=\left\{\begin{matrix}
 \frac{x_2}{x_1+x_2},& {\rm when\;}j=1,\\ 
\frac{x_1}{x_1+x_2}, &{\rm when}\;j=2.
\end{matrix}\right.
\end{equation}

Algorithm~\ref{incomplete_air_mechanism}summarizes our fraction design for a two-path routing game involving two user types, enabling our AIR mechanism to achieve favorable social welfare while maintaining desirable properties, as shown in the proposition below. 
\begin{proposition}\label{theorme_air_incomplete}
For a two-path routing game with two user types, the 
AIR mechanism with penalty fractions given by Algorithm \ref{incomplete_air_mechanism} satisfies IC, IR, and BB, and achieves $PoA=\frac{1}{4}$.
\end{proposition}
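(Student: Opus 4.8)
The plan is to treat Proposition~\ref{theorme_air_incomplete} as the two-type lift of the single-type warm-up (Lemma~\ref{theorem_irm_homo_poa}): dispatch the three properties quickly, then spend the effort on $PoA=\frac14$ via a careful equilibrium characterization. For the properties, I would argue in a short opening paragraph that since AIR transfers no money, BB holds vacuously; that every user always has the option of path $P_2$, which (with $c_2=0$) yields payoff $\gamma_2\mathcal{U}_i\ge 0$, so her equilibrium payoff is non-negative and IR holds; and that because a user's access fraction $\gamma_j$ is fixed by the path she actually takes and by the anonymous aggregate flow---never by a self-reported type---no user can improve her payoff by deviating from her genuine, preference-aligned best response, giving IC.

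The heart of the argument is to pin down the equilibrium flow produced by each branch of Algorithm~\ref{incomplete_air_mechanism}. Under the flow-dependent fractions (\ref{fractions_rsm_homo_incomplete_01}) we have $\gamma_1-\gamma_2=1-2x_1$, so the majority type-1 users (recall $\eta>\frac12$) are indifferent exactly when $\mathcal G(x_1)=c_1$; by Lemmas~\ref{lemma_incomplete_air_01}--\ref{lemma_incomplete_air_02} this equation has the balanced root $\widetilde x\to\frac12^-$ and, when $w_1\ge 1-\frac{c_1}{U(1)}$, also the imbalanced root $\overline x\to 0^+$. The key structural step is to decide when the imbalanced root can actually be sustained as a full two-type equilibrium once the minority type-2 users best-respond. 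I would show that the binding constraint is whether the minority mass fits beneath $\overline x$: if $1-\eta\le\overline x$, the type-2 users can all sit on $P_1$ while the indifferent type-1 users top $P_1$ up exactly to $x_1=\overline x$, so the imbalanced equilibrium is feasible; whereas if $\overline x<1-\eta$, forcing the type-2 mass onto $P_1$ already overshoots $\overline x$ and only the balanced $\widetilde x$ survives. This threshold is precisely the test in Algorithm~\ref{incomplete_air_mechanism}, and establishing it rigorously---tracking both types' best responses and excluding spurious equilibria---is the main obstacle of the proof, being the genuinely new difficulty relative to the single-type warm-up.

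With the branch condition justified, the PoA bound in each regime reduces to a single-type computation. When the test fails (or $\overline x$ does not exist), Algorithm~\ref{incomplete_air_mechanism} keeps the flow-dependent fractions, the only surviving equilibrium is the balanced $\widetilde x\to\frac12$, and there $\gamma_1,\gamma_2\to\frac12$ while $\mathcal U_i(\tfrac12,\tfrac12)=U(\tfrac12)$ for every type; hence the realized (restricted) welfare tends to $\frac12 U(\frac12)$, which by concavity and $U(0)=0$ is at least $\frac14 U(1)\ge\frac14\,SW^{*}$. When the test passes, Lemma~\ref{lemma_incomplete_air_02} forces the near-degenerate regime $w_1\to 1$, so the dominant type behaves like the $w_0=1$ single type of Lemma~\ref{theorem_irm_homo_poa}; switching to the fixed fractions (\ref{fractions_rsm_homo_incomplete_02}) reproduces that case, again yielding realized welfare $\approx\frac12 U(1)\ge\frac14\,SW^{*}$. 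In both regimes I would invoke $\frac{c_1}{U(1)}\to 0$ to discard the cost terms, obtaining $PoA\ge\frac14$.

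Finally, to upgrade $\ge\frac14$ to equality I would reuse the tight instance behind Lemma~\ref{theorem_irm_homo_poa}: take $U$ linear so that $U(\frac12)=\frac12 U(1)$ and $SW^{*}\to U(1)$, making the balanced-equilibrium welfare exactly $\frac14 SW^{*}$. The delicate points to watch throughout are the limiting manipulations hidden in $\widetilde x\to\frac12^-$, $\overline x\to 0^+$, and $\frac{c_1}{U(1)}\to 0$, which must be handled as consistent asymptotics rather than exact equalities, and the confirmation that the minority type-2 users never spawn an additional, strictly worse equilibrium in either branch.
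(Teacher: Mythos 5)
Your overall architecture tracks the paper's setup — dispatch IR/IC/BB, characterize the roots of $\mathcal{G}(x)=c_1$ via Lemmas~\ref{lemma_incomplete_air_01}--\ref{lemma_incomplete_air_02}, justify the branch test of Algorithm~\ref{incomplete_air_mechanism}, then reduce each branch to the single-type computation — and the property arguments and the balanced-branch bound $\tfrac12 U(\tfrac12)\ge\tfrac14 U(1)\ge\tfrac14 SW^{*}$ (using $\sum_j\sum_i\eta_i w_{ij}=1$) are sound. The problem is that the step you yourself flag as ``the main obstacle'' is asserted rather than proved, and as stated it does not go through. Your feasibility argument assumes that at the imbalanced root $x_1=\overline{x}$ all type-2 users sit on $P_1$, so that $\overline{x}<1-\eta$ renders the flow infeasible. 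But under the fractions in (\ref{fractions_rsm_homo_incomplete_01}) the type-2 best response at $\overline{x}\rightarrow 0^{+}$ is $P_1$ only when $\mathcal{U}_2(\overline{x},1-\overline{x})\ge\mathcal{U}_1(\overline{x},1-\overline{x})$, i.e.\ only when $w_2\le w_1$. If $w_2>w_1$, the minority strictly prefers $P_2$ there, and the flow $x_1=\overline{x}$ can be populated entirely by type-1 users (feasible since $\eta>\tfrac12>\overline{x}$), so the near-zero-welfare equilibrium survives the test and your branch dichotomy collapses. You must either rule this configuration out (e.g.\ by exploiting that $w_2>w_1\ge 1-\tfrac{c_1}{U(1)}$ forces both weights to $1$ and treating it as the degenerate case) or run the analysis symmetrically; relatedly, equilibria in which type-2 rather than type-1 is the split type satisfy $(1-2x)\mathcal{U}_2(x,1-x)=c_1$, a different equation with its own imbalanced root, and nothing in your argument excludes them.

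The fixed-fraction branch is also not literally ``reproduced'' from Lemma~\ref{theorem_irm_homo_poa}. Under (\ref{fractions_rsm_homo_incomplete_02}) the indifference condition becomes $\tfrac{c_1}{U(1)}\,\mathcal{U}_i(x_1,1-x_1)=c_1$, i.e.\ $\mathcal{U}_i(x_1,1-x_1)=U(1)$, which has no interior solution once $w_1<1$; the equilibrium set in that branch therefore consists of corner flows, and you need to enumerate them and bound their welfare rather than import the $w_0=1$ computation wholesale. In particular, the all-on-$P_2$ corner is individually stable because a single non-atomic deviator to $P_1$ pays $c_1$ without changing the flow; you must show it is either excluded by the equilibrium notion or still meets the $\tfrac14$ bound. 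Since $PoA$ is a minimum over \emph{all} equilibria, the claim $PoA=\tfrac14$ is not established until the equilibrium sets of both branches are pinned down exhaustively; your proposal correctly names these verifications as the delicate points but leaves them undone.
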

Next, we extend the design of AIR's penalty fractions to a general multi-path routing game with more user preference types. 
\subsection{Generalization: Matryoshka Fraction Design for Multi-path Routing Games}\label{section_matryosha}
%
%
To address general routing games with $k\geq 2$ paths and $m\geq 2$ user types, we propose a recursive decomposition approach that outcomes a sequence of sub-routing games, forming a nested structure reminiscent of Matryoshka dolls. Accordingly, we refer to this penalty fraction design as Matryoshka, as detailed in Algorithm~\ref{general_incomplete_air_mechanism}.

Specifically, in each recursive step, the multi-path routing game is reduced to an (outer) two-path game by partitioning the set of paths into two parts. Each part serves as an aggregate path in the outer game and, if it contains multiple paths, forms a sub-routing game to be recursively processed in the next step. Notably, 
the paths involved in an inner sub-routing game all originate from the same part of its corresponding outer routing game in the previous recursion step.


\begin{algorithm}[ht]
\caption{Matryoshka fraction design for AIR in multi-path}\label{general_incomplete_air_mechanism}
\SetKwInOut{Input}{Input}
\SetKwInOut{Output}{Output}

\Input{ $w_{ij}$, $\eta_i$, $x_j$ and $c_j$ that correspond to each type $i\in [m]$ of users and each path $P_j$ ($j\in [k]$).}
 Arrange paths $\{P_j|j\in[k]\}$ in descending order of their utility weights $[\sum\limits_{i=1}^m \eta_i w_{ij}]$, resulting in $P_1, P_2, ..., P_k$.\\
 Find the dominating user type $\mu\triangleq \arg\max\limits_{i\in[m]}\{\eta_i\}$ which accounts for the largest proportion $\eta_{\mu}$ among all users.\\
\ForAll{$\lambda = 1:k-1$}{
 According to path $P_{\lambda}$, partition paths $\{P_j|j=\lambda+1,...,k\}$ into two groups:
$L_{\lambda}\triangleq \{P_j\in \Psi_{\lambda}|c_j\leq c_{\lambda}\}$ and $H_{\lambda}\triangleq \{P_j\in \Psi_{\lambda}|c_j> c_{\lambda}\}$.\\
 Find in set $L_{\lambda}$ the path $P_{\psi}=\arg\min\limits_{P_j\in L_{\lambda}}{\{c_j\}}$ with the lowest travel cost $c_{\psi}$.\\
\eIf{(\ref{condition_for_incomplete_air_general}) admits a solution $\overline{x}\in[0,0.25]$ such that $1-\eta_{\mu}\leq \overline{x}$}{  
Apply penalty fractions in (\ref{matryoshka_fractions1}).\\
 \textbf{break}}
{
Apply penalty fractions in (\ref{matryoshka_fractions2}).
}
}
\Output{penalty restrictions $\gamma_{j} = \prod_{\lambda = 1}^{k-1}\gamma_{j\lambda}$ for the path $P_j$.}
\end{algorithm}
The effectiveness of Algorithm~\ref{general_incomplete_air_mechanism} hinges on maintaining consistent alignment between the equilibrium flow within each inner sub-routing game and that of its corresponding outer routing game. Below, we discuss how such alignment is achieved.

In each recursive step of Algorithm~\ref{general_incomplete_air_mechanism}, type-$\mu$ users are made indifferent between paths $P_{\lambda}$ and $P_{\psi}$ by ensuring the condition $\Theta_{\mu}(P_{\lambda},x,1-x) = \Theta_{\mu}(P_{\psi},x,1-x)$ holds, that is, 
\begin{equation}\label{condition_for_incomplete_air_general}
    (1-2{x})[\frac{w_{\mu,\lambda}}{w_{\mu,\lambda}+w_{\mu,\psi}}U({x})+\frac{w_{\mu,\psi}}{w_{\mu,\lambda}+w_{\mu,\psi}}U(1-{x})]=c_{\lambda}-c_{\psi}.
\end{equation}
Paths $P_{\lambda}$ and $P_{\psi}$ in the above (\ref{condition_for_incomplete_air_general}) are selected in Lines~4–5 of Algorithm\ref{general_incomplete_air_mechanism} from the same partition of the current two-path routing game, with their costs satisfying $c_{\lambda} \geq c_{\psi}$.

Based on whether Equation~(\ref{condition_for_incomplete_air_general}) admits a solution $\overline{x} \in [1 - \eta_{\mu}, 0.25]$, Algorithm~\ref{general_incomplete_air_mechanism} selects appropriate penalty fractions according to the following two cases.

\noindent\textbf{Case 1:} When (\ref{condition_for_incomplete_air_general}) admits a solution $\overline{x}\in[1-\eta_{\mu},0.25]$, that is, when the \textbf{if} condition in Line~6 of Algorithm~\ref{general_incomplete_air_mechanism} is satisfied, the penalty fraction assigned to path $P_j$ in the current recursive round $\lambda$ is given by:
\begin{equation}\label{matryoshka_fractions1}
\gamma_{j\lambda}=\left\{\begin{matrix}
1,&{\rm if}\;P_j\in \{P_1,...,P_{\lambda-1}\},\\
 \frac{1}{2}+\frac{c_{\lambda}-c_{\psi}}{2U(1)},&{\rm if}\;P_j\in\{P_{\lambda}\}\cup H_{\lambda}, \\ 
\frac{1}{2}-\frac{c_{\lambda}-c_{\psi}}{2U(1)}, &{\rm if}\;P_j\in L_{\lambda},
\end{matrix}\right.
\end{equation}
The last two lines of (\ref{matryoshka_fractions1}) can be interpreted as assigning penalty fractions for an inner two-path routing game. In this inner game, the upper part comprises the path $P_{\lambda}$ and the set $H_{\lambda}$ of paths that have higher costs than $P_{\lambda}$ (as identified in Line~4 of Algorithm~\ref{general_incomplete_air_mechanism}). The lower part in the inner game consists of paths in $L_{\lambda}$, all with lower costs than $P_{\lambda}$ due to the Line~4 of Algorithm~\ref{general_incomplete_air_mechanism}). 

Notably, the equilibrium flow in the upper and lower parts is concentrated on paths $P_{\lambda}$ and $P_{\psi}$ (as selected in Line~5 of Algorithm~\ref{general_incomplete_air_mechanism}), respectively, since these two paths have the lowest costs within their parts. As the recursion ends here, the equilibrium flow on each path remains consistent between the inner and outer routing games.

\noindent\textbf{Case 2:} When the \textbf{if} condition in Line~6 of Algorithm~\ref{general_incomplete_air_mechanism} is not satisfied,  the penalty fraction imposed on path $P_j$ in the current recursion round $\lambda$ follows:
\begin{equation}\label{matryoshka_fractions2}
\gamma_{j\gamma}=\left\{\begin{matrix}
1,& {\rm if}\;j\in[\lambda-1],\\
\sum\limits_{t=\lambda+1}^k x_t/\sum\limits_{t=\lambda}^k x_t,& {\rm if\;}j={\lambda},\\ 
x_{\lambda}/\sum\limits_{t=\lambda}^k x_t, &{\rm otherwise},
\end{matrix}\right.
\end{equation}
Similar to Case~1, the above (\ref{matryoshka_fractions2}) also treats the multi-path game as an inner two-path game. Here, the upper part consists of a single path $P_{\lambda}$ which has the highest utility weight among all paths involved in the current round, while the lower part includes the remaining paths of the current round. Analogous to (\ref{fractions_rsm_homo_incomplete_01}), the flow-dependent fractions in the last two lines of (\ref{matryoshka_fractions2}) give the penalty fractions assigned to the upper and lower parts of the current recursive step, respectively.

In Algorithm~\ref{general_incomplete_air_mechanism}, Line~1 (for path arrangement) ensures that the path with the highest utility weight among those in the same recursive step is always assigned to the upper part.  This characteristic allows us to extend Proposition~\ref{theorme_air_incomplete} to the following theorem.

\begin{theorem}\label{air_multipath_poa}
   For a general multi-path routing game with $k$ paths and $m$ user types, the AIR mechanism employing penalty fractions from Algorithm~\ref{general_incomplete_air_mechanism} satisfies IC, IR, and BB. It runs in $O(k \log k+\log m)$ time and guarantees $PoA =\frac{1}{4}$.
\end{theorem}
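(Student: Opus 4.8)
The plan is to derive IR, IC, and BB together with the complexity bound directly from the description of Algorithm~\ref{general_incomplete_air_mechanism}, and to establish $PoA=\frac{1}{4}$ by induction on the recursion depth, anchoring it at the two-path base cases of Lemma~\ref{theorem_irm_homo_poa} and Proposition~\ref{theorme_air_incomplete}. Because AIR transfers no money and only restricts information access, BB holds vacuously. For IR, I would note that every factor produced by (\ref{matryoshka_fractions1}) or (\ref{matryoshka_fractions2}) is strictly positive, so the product $\gamma_j=\prod_{\lambda}\gamma_{j\lambda}$ is positive; the required inequality $\gamma_j\mathcal{U}_i(\mathbf{x})\geq c_j$ in (\ref{new_payoff_air}) then follows because the standing assumption $c_j/U(1)\to0$ drives $c_j/\mathcal{U}_i(\mathbf{x})$ below the (fixed-depth) lower bound on $\gamma_j$. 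For IC, the decisive observation is that each $\gamma_{j\lambda}$ depends only on the observed flows, the fixed dominating type $\mu$, and the costs---never on any user's reported preference---so no type can profit by deviating from the path aligned with its true $\bm{W}_i$, and the indifference condition (\ref{condition_for_incomplete_air_general}) pins the equilibrium to a preference-respecting flow.

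For the PoA I would first isolate the structural fact the text flags as the crux: the equilibrium flow of the full $k$-path game, projected onto any sub-routing game spawned in the recursion, coincides with the equilibrium flow of that sub-game under its induced fractions. I would prove this by showing that within each part of a round-$\lambda$ split the equilibrium mass concentrates on the lowest-cost path ($P_{\lambda}$ in the upper part $\{P_{\lambda}\}\cup H_{\lambda}$, and $P_{\psi}$ in the lower part $L_{\lambda}$), so the two aggregate paths of the outer two-path game carry exactly the masses obtained by treating $P_{\lambda}$ and $P_{\psi}$ as representatives. Granting this, the induction proceeds as follows: at the outermost round, Lines~1--5 collapse the game into a two-path game whose fraction design (\ref{matryoshka_fractions1}) in Case~1 or (\ref{matryoshka_fractions2}) in Case~2 is exactly the two-path design of Algorithm~\ref{incomplete_air_mechanism} applied to the dominating type $\mu$, so Proposition~\ref{theorme_air_incomplete} yields a balanced equilibrium with welfare at least $\frac{1}{4}$ of the outer optimum. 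Invoking the consistency fact and the concavity of $U$, I would then peel off one recursion layer at a time, at each layer comparing the realized equilibrium welfare against a layer-local upper bound on $SW^{*}$, and argue that the $\frac{1}{4}$ ratio survives as the layers aggregate.

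For the running time, the descending utility-weight ordering in Line~1 costs $O(k\log k)$; identifying the dominating type $\mu$ in Line~2 accounts for the $O(\log m)$ term; and since the loop over $\lambda$ resolves each path exactly once (solving the monotone equation (\ref{condition_for_incomplete_air_general}) by bisection, using Lemma~\ref{lemma_incomplete_air_01}), it adds only $O(k)$, giving the claimed $O(k\log k+\log m)$.

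The hardest part will be the consistency fact together with the aggregation step that turns it into a global $\frac{1}{4}$ bound. The difficulty is twofold. First, every round still involves all $m$ types rather than two, so I must argue that making the dominating type $\mu$ indifferent via (\ref{condition_for_incomplete_air_general})---combined with the descending utility-weight ordering of Line~1, which forces the highest-weight path into the upper part---suffices to keep every other type on a preference-aligned path and to rule out the imbalanced $\overline{x}$-equilibrium flagged in Lemma~\ref{lemma_incomplete_air_02}. Second, I must verify that the nested two-path bounds compose without loss, i.e., that the multiplicative structure $\gamma_j=\prod_{\lambda}\gamma_{j\lambda}$ does not erode the guarantee as the recursion deepens; showing that the ratio stabilizes at $\frac{1}{4}$ rather than degrading with depth is the technical heart of the argument.
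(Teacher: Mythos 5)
Your proposal follows essentially the same route as the paper: the recursive Matryoshka decomposition into nested two\nobreakdash-path sub-games, the consistency of the equilibrium flow between each inner sub-game and its outer game, and the reduction to the two-path base cases of Lemma~\ref{theorem_irm_homo_poa} and Proposition~\ref{theorme_air_incomplete}, with IR/IC/BB and the $O(k\log k+\log m)$ complexity read off directly from the structure of Algorithm~\ref{general_incomplete_air_mechanism}. The two difficulties you flag---keeping all $m$ types on preference-aligned paths while only the dominating type $\mu$ is made indifferent via (\ref{condition_for_incomplete_air_general}), and showing that the product $\gamma_j=\prod_{\lambda}\gamma_{j\lambda}$ does not erode the $\frac{1}{4}$ ratio as the recursion deepens---are precisely the load-bearing steps of the paper's argument, so your plan is on target.
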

As established in Theorem~\ref{air_multipath_poa}, our AIR mechanism achieves the first constant PoA in polynomial time for multi-path routing games, substantially improving upon the equilibrium in Lemma~\ref{thm_no_incentive_basic}, which yields a PoA of zero and suffers arbitrarily large efficiency loss. 
While a gap remains in comparison to the social optimum, our AIR mechanism highlights its significantly low time complexity, given that finding a polynomial-time optimal equilibrium is typically intractable. It is also worth noting that, as demonstrated by our experiments on a real-world dataset in Section~\ref{sec_simulation}, the AIR mechanism performs empirically even closer to the social optimum than its theoretical PoA guarantee.

In the next section, we further explore pricing/monetary mechanisms for platforms that support users' payment transactions.
\section{New Monetary Mechanism: Adaptive Side-payment}\label{sec_spm}
We first consider a typical two-path game to inspire the design for general multi-path routing games.
\subsection{Warm-up: Two-path Routing Game}
The warm-up routing game involves two paths and two user types, adopting the notation from AIR to specify its setting, as described right after Lemma~\ref{theorem_irm_homo_poa} in Section~\ref{sec_warmup_airr}. 
To address this, we present an adaptive side-payment (ASP) mechanism as outlined below in Mechanism~\ref{two-path-m}.
%
\begin{mechanism}[\textbf{Adaptive Side-Payment Mechanism, ASP}]\label{two-path-m}
ASP charges each user on the low-cost path $P_2$ a fixed payment $\tau$, and redistributes the total collected payment  $x_2\tau$ evenly to users on the high-cost path $P_1$, ensuring budget balance. 
\end{mechanism}
The critical parameter $\tau$ in ASP directly influences the efficiency of ASP and will be optimized as follows. To facilitate this optimization, we first present the following Lemma~\ref{condition_NE_SPM}, which offers a crucial condition for identifying the optimal $\tau$ associated with an equilibrium flow $(\widehat{x}_1,\widehat{x}_2)$ under ASP.
%
\begin{lemma}\label{condition_NE_SPM}
Any equilibrium flow $(\widehat{x}_1,\widehat{x}_2)$ in ASP satisfies $\frac{\widehat{x}_2}{\widehat{x}_1}=\frac{c_1-\tau}{\tau}$, which is equivalent to
\begin{equation}\label{uniqueness_SPM_NE}
    (\widehat{x}_1,\widehat{x}_2)=(\frac{\tau}{c_1},1-\frac{\tau}{c_1}).
\end{equation}
\end{lemma}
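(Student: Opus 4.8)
The plan is to exploit the \emph{public-good} structure of the game, which reduces the proof to a one-line cancellation followed by solving a linear system. The crucial observation is that, in this non-atomic routing game, the PoI utility $\mathcal{U}_i(x_1,x_2)=w_{i1}U(x_1)+w_{i2}U(x_2)$ in (\ref{utility_multi_path}) enjoyed by any type-$i$ user depends only on the aggregate flow vector and \emph{not} on which of the two paths she herself selects: since a single user is infinitesimal, her unilateral deviation leaves $(x_1,x_2)$ unchanged. Consequently, when a user weighs $P_1$ against $P_2$, the utility term is common to both options and cancels, so the equilibrium indifference condition reduces to a comparison of \emph{effective costs} (the travel cost adjusted by the ASP side payment and reward).

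First I would write down the payoffs under Mechanism~\ref{two-path-m} by adapting (\ref{payoff_multi_path}). A user on the low-cost path $P_2$ (with $c_2=0$) pays $\tau$, so her payoff is $\Theta_i(P_2,x_1,x_2)=\mathcal{U}_i(x_1,x_2)-\tau$. The collected pot $x_2\tau$ is split evenly among the mass $x_1$ of users on $P_1$, giving each a reward $\tfrac{x_2\tau}{x_1}$, so $\Theta_i(P_1,x_1,x_2)=\mathcal{U}_i(x_1,x_2)-c_1+\tfrac{x_2\tau}{x_1}$. By Definition~\ref{NE_def}, at an equilibrium where both paths carry positive flow, no user gains by switching, hence users are indifferent; after the utility term cancels this becomes $-c_1+\tfrac{\widehat{x}_2\tau}{\widehat{x}_1}=-\tau$, i.e. $\tfrac{\widehat{x}_2}{\widehat{x}_1}=\tfrac{c_1-\tau}{\tau}$, which is exactly the first claimed identity. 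I would emphasize that this condition is type-\emph{independent}, so it is met simultaneously for both user types and no conflict between types can arise.

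Then I would combine this ratio with the normalization $\widehat{x}_1+\widehat{x}_2=1$. Substituting $\widehat{x}_2=\tfrac{c_1-\tau}{\tau}\widehat{x}_1$ gives $\widehat{x}_1\big(1+\tfrac{c_1-\tau}{\tau}\big)=\widehat{x}_1\cdot\tfrac{c_1}{\tau}=1$, hence $\widehat{x}_1=\tfrac{\tau}{c_1}$ and $\widehat{x}_2=1-\tfrac{\tau}{c_1}$, establishing the stated equivalence (\ref{uniqueness_SPM_NE}). To close the argument I would substitute this flow back into the two payoffs to confirm the indifference holds (so it is genuinely an equilibrium), and note that for $\tau\in(0,c_1)$ the solution lies strictly in $(0,1)$, guaranteeing a well-defined interior equilibrium.

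The argument involves no hard estimate; the lemma follows almost immediately from budget balance together with the public-good cancellation. The only points requiring care are (i) justifying that the utility term drops out in the non-atomic setting, which hinges on the infinitesimal-deviation property of Definition~\ref{NE_def}, and (ii) ruling out spurious corner equilibria, since for $\tau\in(0,c_1)$ neither $\widehat{x}_1=0$ nor $\widehat{x}_2=0$ can satisfy the indifference condition, leaving the interior solution as the unique equilibrium. I expect (i) to be the only conceptual subtlety, while everything else is routine linear algebra.
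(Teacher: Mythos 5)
Your proof is correct and follows the natural argument the paper evidently intends (its own proof is deferred/omitted): the public-good utility term $\mathcal{U}_i(x_1,x_2)$ cancels in the type-independent indifference condition, leaving $-c_1+\widehat{x}_2\tau/\widehat{x}_1=-\tau$, and normalization gives (\ref{uniqueness_SPM_NE}). Your added care about corner flows and interiority for $\tau\in(0,c_1)$ is a sensible completeness check and does not change the substance.
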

To further characterize the optimal $\tau$, we give the following Lemma~\ref{lemma_spm_001}, which identifies a fixed point $(\frac{1}{2},U(\frac{1}{2}))$ of the utility function $\mathcal{U}_i$ in (\ref{utility_multi_path}), and its subsequent Lemma~\ref{monotonicity_property}, which
provides the monotonicity intervals of each $\mathcal{U}_i$.
\begin{lemma}\label{lemma_spm_001}
$\mathcal{U}_i(x,1-x)+\mathcal{U}_i(1-x,x)=U(x)+U(1-x)$ and  $\mathcal{U}_i(\frac{1}{2},\frac{1}{2})=U(\frac{1}{2})$.
\end{lemma}
\begin{lemma}\label{monotonicity_property}
If  $\frac{1}{2}<w_i\leq1$, $\mathcal{U}_i(x,1-x)$ monotonically increases as $x$ increases in $[0,\frac{1}{2}]$ . If  $0<w_i\leq\frac{1}{2}$, $\mathcal{U}_i(x,1-x)$ monotonically decreases as $x$ increases in $[\frac{1}{2},1]$.
\end{lemma}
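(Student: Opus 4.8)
The plan is to work directly from the explicit two-path form $\mathcal{U}_i(x,1-x) = w_i U(x) + (1-w_i)U(1-x)$, obtained by substituting the flow $(x,1-x)$ into (\ref{utility_multi_path}). Rather than differentiating, I would argue monotonicity by comparing function values at two nearby points, so that I only use the two standing hypotheses on $U$ from Section~\ref{model_sec} (increasing and concave) and never assume $U\in C^1$. For the first claim, I would fix $0\le x_1<x_2\le\tfrac12$ and write the difference $\mathcal{U}_i(x_2,1-x_2)-\mathcal{U}_i(x_1,1-x_1)=w_iA-(1-w_i)B$, where $A\triangleq U(x_2)-U(x_1)\ge 0$ is the increment of $U$ over the interval $[x_1,x_2]$ and $B\triangleq U(1-x_1)-U(1-x_2)\ge 0$ is its increment over $[1-x_2,1-x_1]$. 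These two intervals have equal length $x_2-x_1$.

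The key step is the concavity comparison of these equal-length increments. Since $x_2\le\tfrac12\le 1-x_2$, the interval $[x_1,x_2]$ lies entirely to the left of $[1-x_2,1-x_1]$, and concavity of $U$ (equivalently, non-increasing slope) forces $A\ge B\ge 0$. Combining this with $w_i>\tfrac12$, hence $w_i\ge 1-w_i$, I would chain $w_iA-(1-w_i)B\ge w_iB-(1-w_i)B=(2w_i-1)B\ge 0$, giving that $\mathcal{U}_i(x,1-x)$ is non-decreasing on $[0,\tfrac12]$. This is the crux of the whole argument; everything else is bookkeeping.

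For the second claim I would either mirror the computation or, more economically, invoke the symmetry already recorded in Lemma~\ref{lemma_spm_001}. Observe that replacing $x$ by $1-x$ swaps the roles of $w_i$ and $1-w_i$, i.e.\ the function with weight $w_i\le\tfrac12$ evaluated at $x$ equals the function with weight $1-w_i\ge\tfrac12$ evaluated at $1-x$. Since the latter is non-decreasing on $[0,\tfrac12]$ by the first part, and $x\mapsto 1-x$ reverses orientation and maps $[\tfrac12,1]$ onto $[0,\tfrac12]$, the original function is non-increasing on $[\tfrac12,1]$. Alternatively, repeating the increment argument with $\tfrac12\le x_1<x_2\le 1$ now places $[1-x_2,1-x_1]$ to the left of $[x_1,x_2]$, so concavity yields $B\ge A\ge 0$; together with $1-w_i\ge w_i$ this gives $w_iA-(1-w_i)B\le(2w_i-1)B\le 0$, the desired decrease. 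I would also note that the boundary case $w_i=\tfrac12$ is covered by the second statement (the function is symmetric about $\tfrac12$), and check the endpoint values to confirm the monotonicity is not vacuous.

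The main obstacle is not depth but care: the only genuinely delicate point is justifying $A\ge B$ (resp.\ $B\ge A$) for a merely concave, possibly non-differentiable $U$, which I would pin down by the standard fact that for a concave function the increment over a sliding window of fixed width is non-increasing as the window moves right. Getting the left/right placement of the two windows correct in each of the two regimes is where a sign error could creep in, so I would state the placement explicitly before applying concavity.
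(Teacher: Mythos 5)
Your argument is correct and follows essentially the same route the paper takes: write $\mathcal{U}_i(x,1-x)=w_iU(x)+(1-w_i)U(1-x)$, use concavity of $U$ to compare its marginal change near $x$ against that near $1-x$, and then weight these by $w_i$ versus $1-w_i$ to fix the sign in each regime. The only difference is cosmetic --- you compare equal-length finite increments where one would normally differentiate --- which makes your version valid even for non-differentiable concave $U$, and your placement of the two windows and the resulting inequalities $A\ge B$ (resp.\ $B\ge A$) are handled correctly in both cases.
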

 \begin{figure}[t]
     \centering
     \includegraphics[width=7cm]{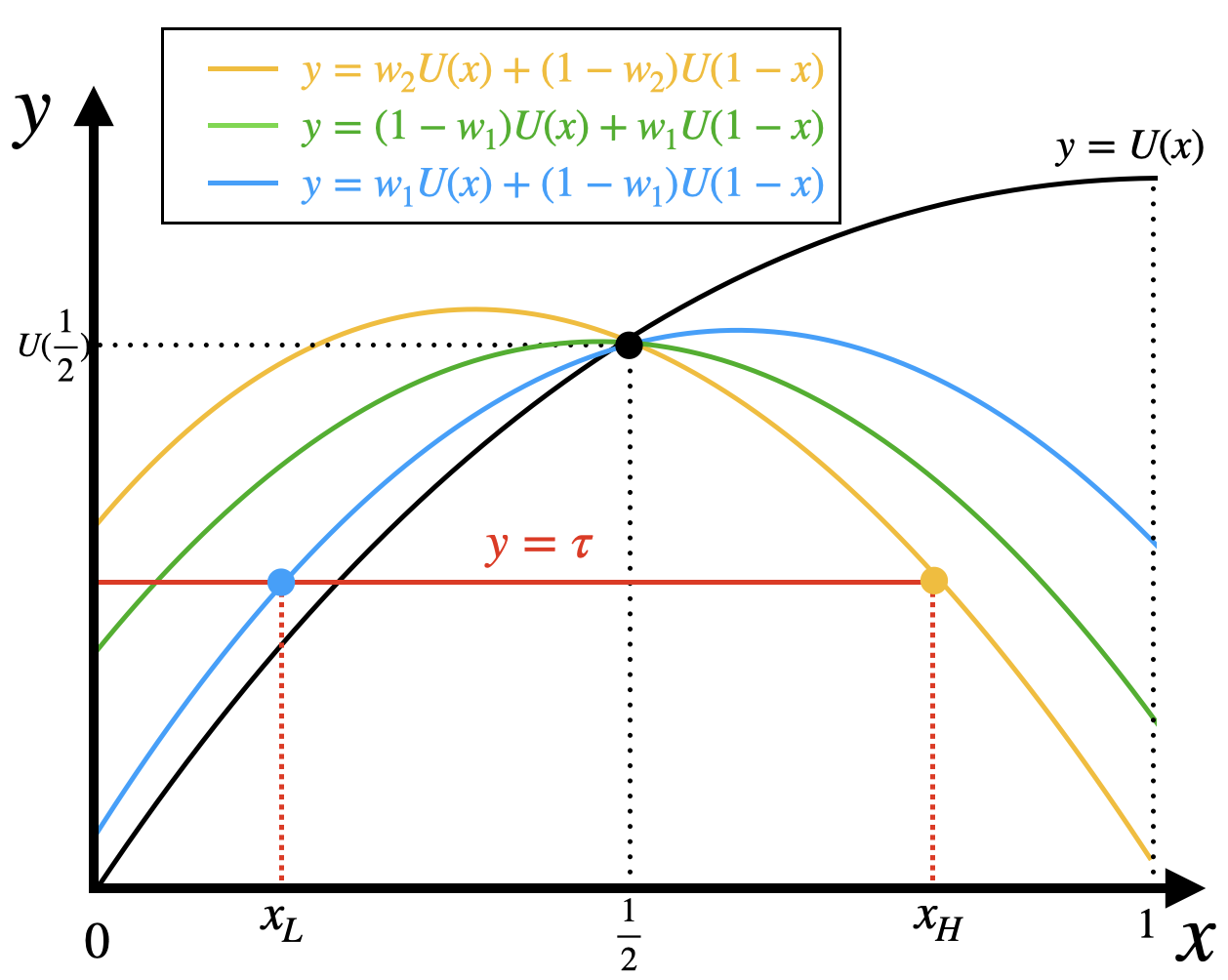}
     \caption{An illustration example for properties of concavely increasing function $U(x)$ and $\mathcal{U}_i(x,1-x)$ in (\ref{utility_multi_path}).}
     \label{fig02}
 \end{figure}
For the case $w_1>w_2$, Fig.~\ref{fig02} depicts the utility functions $\mathcal{U}_1$ and $\mathcal{U}_2$, defined in (\ref{utility_multi_path}), as the blue and yellow curves, respectively. In particular, when $w_1+w_2=1$, functions $\mathcal{U}_1$ (in blue) and $\mathcal{U}_2$ (in green) are symmetric about the vertical line $x=\frac{1}{2}$, revealing the fixed point $(\frac{1}{2},U(\frac{1}{2}))$ described in Lemma~\ref{lemma_spm_001}. As Lemma~\ref{monotonicity_property} indicates and Fig.~\ref{fig02} illustrates, the horizontal line $y=\tau$ intersects $\mathcal{U}_1$ and $\mathcal{U}_2$ at 
points $(x_L,\mathcal{U}_1(x_L,1-x_L))$ and $(x_H,\mathcal{U}_2(x_H,1-x_H))$, respectively, with $x_L<0.5<x_H$. This observation facilitates the following lemma,  which narrows down the feasible range for the optimal $\tau$.
\begin{lemma}\label{condition_both_ir_ic}
 When $\tau \leq U(\frac{1}{2})$, the ASP mechanism satisfies both IR and IC.
 \end{lemma}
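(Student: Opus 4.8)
The plan is to reduce IR to the single scalar comparison $\mathcal{U}_i(\widehat{x}_1,\widehat{x}_2)\ge\tau$ and to obtain IC from the type-independence of the payoff gap, both exploiting the geometry of the utility curves established in Lemmas~\ref{lemma_spm_001} and~\ref{monotonicity_property}. First I would write out the ASP-adjusted payoffs from Mechanism~\ref{two-path-m}: a type-$i$ user on $P_2$ earns $\mathcal{U}_i(\widehat{x}_1,\widehat{x}_2)-\tau$, while on $P_1$ she earns $\mathcal{U}_i(\widehat{x}_1,\widehat{x}_2)-c_1+\frac{\widehat{x}_2\tau}{\widehat{x}_1}$. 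The key observation is that the gap between these two expressions, namely $\frac{\tau}{\widehat{x}_1}-c_1$, is independent of the type index $i$, because every user obtains the full aggregation in (\ref{utility_multi_path}) regardless of her path. By Lemma~\ref{condition_NE_SPM} this gap vanishes at equilibrium, so all types are simultaneously indifferent and the common equilibrium payoff of a type-$i$ user equals $\mathcal{U}_i(\widehat{x}_1,\widehat{x}_2)-\tau$.

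For IR I would show this common payoff is nonnegative, i.e. $\mathcal{U}_i(\widehat{x}_1,\widehat{x}_2)\ge\tau$ for each type, and this is exactly where the hypothesis $\tau\le U(\tfrac12)$ enters, through the fixed point $\mathcal{U}_i(\tfrac12,\tfrac12)=U(\tfrac12)$ of Lemma~\ref{lemma_spm_001}. Since the horizontal line $y=\tau$ lies weakly below this common value, the monotonicity of Lemma~\ref{monotonicity_property} guarantees it meets the increasing branch of $\mathcal{U}_1$ at some $x_L\le\tfrac12$ (with $x_L=0$ if the curve stays above $\tau$) and the decreasing branch of $\mathcal{U}_2$ at some $x_H\ge\tfrac12$ (with $x_H=1$ analogously), yielding a nonempty safe interval $[x_L,x_H]\ni\tfrac12$ on which $\min\{\mathcal{U}_1(x,1-x),\mathcal{U}_2(x,1-x)\}\ge\tau$. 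I would then verify that the equilibrium flow $\widehat{x}_1=\tfrac{\tau}{c_1}$ of (\ref{uniqueness_SPM_NE}) falls in this interval, so that both types clear the IR threshold. For IC, the type-independence of the payoff gap noted above does the work: because the sign of $\frac{\tau}{x_1}-c_1$ is common to all types, no type can strictly improve by deviating from its equilibrium (preference-aligned) path, and keeping $\tau$ below $U(\tfrac12)$ ensures the equilibrium stays interior so that the type-$1$ users who value $P_1$ and the type-$2$ users who value $P_2$ can each occupy their preferred path in proportions consistent with $x_L$ and $x_H$.

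The main obstacle is the simultaneous control of the two distinct utility curves: a single scalar $\tau$ (equivalently a single flow $\widehat{x}_1$) must secure $\mathcal{U}_i\ge\tau$ for both types at once, and because $\mathcal{U}_1$ and $\mathcal{U}_2$ move in opposite directions as $x_1$ crosses $\tfrac12$ (Lemma~\ref{monotonicity_property}), the binding type switches there. Making the argument airtight therefore requires (i) confirming that $y=\tau$ actually intersects each curve on the monotone branch claimed, which is precisely the role of $\tau\le U(\tfrac12)=\mathcal{U}_i(\tfrac12,\tfrac12)$, and (ii) checking the degenerate situations $w_i\to\tfrac12$ and $\widehat{x}_1\to\{0,1\}$, where one intersection point collapses. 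I expect step (i)---pinning the safe interval to straddle $\tfrac12$ under the stated bound---to be the crux, whereas IC should follow almost immediately from the type-independence of the payoff gap.
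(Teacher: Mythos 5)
Your proposal follows essentially the same route the paper takes: IC from the type-independence of the payoff gap $\frac{\tau}{\widehat{x}_1}-c_1$, which vanishes at equilibrium by Lemma~\ref{condition_NE_SPM}, and IR by reducing to $\mathcal{U}_i(\widehat{x}_1,\widehat{x}_2)\geq\tau$ and using Lemmas~\ref{lemma_spm_001} and~\ref{monotonicity_property} to obtain the interval $[x_L,x_H]\ni\frac{1}{2}$ on which both utility curves clear $\tau$ --- precisely the $x_L<0.5<x_H$ observation the paper states as the motivation for this lemma. The one step you defer, verifying $\widehat{x}_1=\frac{\tau}{c_1}\in[x_L,x_H]$, is exactly what the paper enforces separately as Constraint~(\ref{constraint01_SPM}) of Problem~\RomanNumeralCaps{1}; it does go through under the standing assumption $\frac{c_1}{U(1)}\rightarrow 0$ away from the degenerate boundaries you already flag, so this is a deferral rather than a gap.
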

Building on the above, we formulate Problem \RomanNumeralCaps{1} in (\ref{obj_best_tau_SPM})-(\ref{last_constraint_SPM}) to optimize $\tau$:
\begin{subequations}
\begin{align}
\mbox{\textbf{Problem \RomanNumeralCaps{1}:}}&\;\;\tau^* = \arg\max\limits_{\tau}\{ \alpha_1U(\frac{\tau}{c_1})+\alpha_2U(1-\frac{\tau}{c_1})-\tau\} &\label{obj_best_tau_SPM}\\
\mbox{\textbf{s.t.}} & \; \;c_1 x_L \leq \tau \leq c_1 x_H,&\label{constraint01_SPM}\\
& \; \;\mathcal{U}_1(x_L,1-x_L) =\mathcal{U}_2(x_H,1-x_H)=\tau, &\label{constraint03_SPM}\\
& \;\; 0\leq x_L\leq \frac{1}{2}\leq x_H \leq 1,
&\label{last_constraint_SPM}
\end{align}
\end{subequations}
where the Objective (\ref{obj_best_tau_SPM}) is obtained by substituting (\ref{uniqueness_SPM_NE}) into the social welfare in (\ref{sw_multi_path_new}), Constraints (\ref{constraint01_SPM}) and (\ref{constraint03_SPM}) ensures 
 IR and IC as discussed above in Lemma~\ref{condition_both_ir_ic}, respectively, and Constraint (\ref{last_constraint_SPM}) is a result of Lemma \ref{monotonicity_property} and Fig.~\ref{fig02}.

Since $\tau=\frac{c_1}{2}$ is always a feasible solution to Problem (\ref{obj_best_tau_SPM})-(\ref{last_constraint_SPM}), we establish a lower bound of $\frac{1}{2}$ on the PoA of the ASP mechanism in Proposition~\ref{thm_poa_spa_hetero}, where we use $\tau=\frac{c_1}{2}$ to bridge the ASP and the social optimum $SW^*$.
\begin{proposition}\label{thm_poa_spa_hetero}
In two-path routing games with two heterogeneous user types, the ASP mechanism with the optimal threshold $\tau^*$ output by Problem (\ref{obj_best_tau_SPM})-(\ref{last_constraint_SPM}) satisfies IR, IC, and BB, and guarantees a PoA of $\frac{1}{2}$.
 \end{proposition}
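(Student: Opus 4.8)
The plan is to exploit the clean equilibrium characterization under ASP and reduce the PoA bound to a comparison between one conveniently chosen feasible threshold and the social optimum. By Lemma~\ref{condition_NE_SPM}, for any fixed $\tau$ the ASP mechanism admits the \emph{unique} equilibrium flow $(\widehat{x}_1,\widehat{x}_2)=(\tfrac{\tau}{c_1},1-\tfrac{\tau}{c_1})$, so the per-instance PoA is a single ratio rather than a minimum over many equilibria. Substituting this flow into the social welfare (\ref{sw_multi_path_new}) with $c_2=0$ and $\alpha_1+\alpha_2=1$ reproduces exactly the objective (\ref{obj_best_tau_SPM}) of Problem~\RomanNumeralCaps{1}; hence $\tau^*$ maximizes the equilibrium social welfare over all IR/IC-feasible thresholds, and my task splits into (i) confirming the three properties at $\tau^*$ and (ii) lower-bounding the resulting ratio.

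I would dispatch the properties first. Budget balance is immediate from Mechanism~\ref{two-path-m}, since the total charge $\widehat{x}_2\tau^*$ on $P_2$ is fully rebated to the users on $P_1$. For IR and IC, I invoke Lemma~\ref{condition_both_ir_ic}, so it suffices to show $\tau^*\le U(\tfrac12)$. This follows from the feasibility constraints (\ref{constraint03_SPM})--(\ref{last_constraint_SPM}) together with the monotonicity of $\mathcal{U}_1$ on $[0,\tfrac12]$ (Lemma~\ref{monotonicity_property}) and its peak value $\mathcal{U}_1(\tfrac12,\tfrac12)=U(\tfrac12)$ (Lemma~\ref{lemma_spm_001}): any feasible $\tau=\mathcal{U}_1(x_L,1-x_L)$ with $x_L\le\tfrac12$ cannot exceed $U(\tfrac12)$, so $\tau^*\le U(\tfrac12)$ and both IR and IC hold.

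The core of the argument is the PoA lower bound, which I would obtain by using the suboptimal-but-feasible threshold $\tau=\tfrac{c_1}{2}$ as a bridge. At $\tau=\tfrac{c_1}{2}$ the equilibrium flow is the balanced $(\tfrac12,\tfrac12)$, at which every type's utility equals $U(\tfrac12)$ by Lemma~\ref{lemma_spm_001}; since $U(\tfrac12)\ge\tfrac{c_1}{2}=\tau$ under $\tfrac{c_1}{U(1)}\to0$, this threshold is IR/IC-feasible, so by optimality of $\tau^*$,
\begin{equation}
SW(\widehat{x}_1,\widehat{x}_2)\;\ge\;\alpha_1 U(\tfrac12)+\alpha_2 U(\tfrac12)-\tfrac{c_1}{2}\;=\;U(\tfrac12)-\tfrac{c_1}{2}.
\end{equation}
On the optimum side, monotonicity of $U$ gives the crude but sufficient bound $SW^*\le \alpha_1 U(1)+\alpha_2 U(1)=U(1)$, while concavity with $U(0)=0$ yields $U(\tfrac12)\ge\tfrac12 U(1)$. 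Combining,
\begin{equation}
\frac{SW(\widehat{x}_1,\widehat{x}_2)}{SW^*}\;\ge\;\frac{U(\tfrac12)-\tfrac{c_1}{2}}{U(1)}\;\ge\;\frac12-\frac{c_1}{2U(1)}\;\longrightarrow\;\frac12,
\end{equation}
establishing $PoA\ge\tfrac12$ in the limit $\tfrac{c_1}{U(1)}\to0$.

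I expect the main obstacle to be the feasibility bookkeeping for Problem~\RomanNumeralCaps{1} rather than the inequalities. Specifically, I must verify that $\tau=\tfrac{c_1}{2}$ genuinely satisfies the coupled constraints (\ref{constraint01_SPM})--(\ref{last_constraint_SPM}), i.e.\ that the auxiliary abscissae $x_L\le\tfrac12\le x_H$ exist (or clamp to $0$ and $1$ when a type's utility never falls to $\tau$), and that the equilibrium flow $\tfrac{\tau}{c_1}$ always lies in $[0,1]$, which needs $\tau^*\le c_1$. The balanced-flow observation $\mathcal{U}_i(\tfrac12,\tfrac12)=U(\tfrac12)$ sidesteps most of this, since it makes IR transparent for every type simultaneously. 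A secondary subtlety is the careful treatment of the vanishing term $\tfrac{c_1}{2U(1)}$: the bound attains $\tfrac12$ only in the limit consistent with the worst-case assumption of Section~\ref{model_sec}, so the statement should be read as the guaranteed value rather than a ratio attained at finite $c_1$.
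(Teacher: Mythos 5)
Your proposal is correct and follows essentially the same route as the paper: the paper likewise uses the feasible threshold $\tau=\tfrac{c_1}{2}$ (yielding the balanced equilibrium $(\tfrac12,\tfrac12)$ via Lemma~\ref{condition_NE_SPM}) as a bridge, lower-bounds the ASP welfare by $U(\tfrac12)-\tfrac{c_1}{2}\ge\tfrac12 U(1)-\tfrac{c_1}{2}$ by concavity, upper-bounds $SW^*$ by $U(1)$, and handles IR/IC through Lemma~\ref{condition_both_ir_ic} and BB by construction. Your explicit attention to the feasibility of $\tau=\tfrac{c_1}{2}$ and to the limiting role of $\tfrac{c_1}{U(1)}\to 0$ is consistent with, and slightly more careful than, the paper's presentation.
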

\subsection{General Case: Multi-path Network}

Intuitively, achieving high social welfare in (\ref{sw_multi_path_new}) requires users to not only gain higher utility from a few paths with top utility weights (target 1) but also contribute to more diverse PoI aggregation by exploring a broader range of paths (target 2). Building on this insight, we extend the ASP mechanism from Mechanism~\ref{two-path-m} to the following Mechanism~\ref{m_a_spm} for a general multi-path routing game with $k\geq 2$ paths and $m\geq 2$ user types. This extension employs a two-tiered side-payment scheme designed to balance the two targets and accordingly classify paths into two groups, one for rewarding and one for charging.

\begin{mechanism}\label{m_a_spm}
The ASP mechanism for general multi-path routing games operates through two tiers of side payments: 
\begin{itemize}
    \item \textbf{Tier one}:  
The ASP first applies side payments to the $k$ paths via its subroutine, Algorithm~\ref{ALG_pre_side_payment}, carefully charging or rewarding users to rearrange the effective costs of the paths $(P_1, \ldots, P_k)$ in non-increasing order of their utility weights. The resulting cost vector, combining the original costs with the applied charges or rewards, is denoted (with slight abuse of notation) as $(c_1, \ldots, c_k)$. 
    
\item \textbf{Tier two}: 
Based on the cost vector $(c_1, \ldots, c_k)$ resulting from tier-one side payment, the ASP determines a threshold $\theta$ to partition users into two groups for further charges and rewards. Specifically, for some  $\omega\in\{1,...,k\}$ such that  $c_{\omega}>\theta>c_{\omega+1}$, the ASP charges each user on path $P_j\in\{P_{\omega+1},..,P_k\}$ a payment of $(\tau+c_{\omega+1}-c_j)$.
The total payment collected is given by
\begin{equation*}  \Gamma\triangleq\sum\limits_{j=\omega+1}^kx_j\cdot (\tau+c_{\omega+1}-c_j)
\end{equation*}
and is redistributed as rewards evenly to users on paths $\{P_1,...,P_{\omega}\}$, which have costs higher than $\theta$.
In other words, a user on each path $P_j\in\{P_1,...,P_{\omega}\}$ will receive a reward of $(c_j\Gamma)/(x_j\sum\limits_{l=1}^{\omega}c_l)$.
\end{itemize}
\end{mechanism}

In Mechanism~\ref{m_a_spm}, the cost adjustment in tier one serves two key purposes: \textit{first}, it realigns the paths’ costs to match the order of their utility weights, that is, the path with the $j$th largest utility weight is associated with the $j$th largest cost among the original costs. This rearrangement enables a tractable PoA analysis for the ASP; \textit{second}, the subroutine Algorithm~\ref{ALG_pre_side_payment} ensures budget balance in the following way: during each iteration of the “for-loop,” the adjustment applied to path $P_j$ is $(c_j’ - c_j)$, where $c_j$ is the original cost associated with path $P_j$, while $c_j’$ denote the cost that the subroutine assigns to path $P_j$ and is given by the $j$th largest value among those all those original costs in $\mathbf{c}$. By summing up all cost adjustments over the $k$ paths, we find
\begin{equation*}
    \sum\limits_{{P}_j\in {({P}_1,...,{P}_k)}}{(c_j'-c_j)}=   \sum\limits_{{P}_j\in {({P}_1,...,{P}_k)}}c_j'- \sum\limits_{{P}_j\in {({P}_1,...,{P}_k)}}c_j = 0,
\end{equation*}
telling that the total charge equals the total reward in Algorithm~\ref{ALG_pre_side_payment}, thereby satisfying the budget-balance property.

Crucially, the tier-one side-payment scheme in Algorithm~\ref{ALG_pre_side_payment} systematically rearranges the path costs such that paths with higher utility weights are assigned correspondingly higher costs, which is a prerequisite for the subsequent side-payment design in tier two. 

Since path costs are negligible relative to their utilities (i.e., under the assumption $\frac{c}{U(1)} \rightarrow 1$), it is natural to expect that incentivizing a certain portion of users in the equilibrium flow to shift toward paths with higher utility weights can enhance overall social welfare in (\ref{sw_multi_path_new}).  However, over-concentrating users on only a few top-utility paths may reduce the diversity of the final PoI aggregation, thereby compromising the welfare outcome in (\ref{sw_multi_path_new}).
To strike a balance between higher path utility and higher PoI diversity, we introduce in our tier-two side-payment design the parameter $\theta$, which sets the threshold to classify paths into two groups for charging and rewarding, respectively. Furthermore, the second parameter $\tau$ introduced in tier two indeed incentivizes users who were originally inclined to choose low-utility (charged) paths to instead shift toward high-utility (rewarded) paths.

To optimize these two thresholds/parameters $(\tau^*,\theta^*)$, we formulate  Problem \RomanNumeralCaps{2} in (\ref{obj_multiple_dspproblem}) and  (\ref{multi_aspm_ne_flow_general}). In this formulation, the Objective (\ref{obj_multiple_dspproblem}) robustly maximizes the worst-case social welfare under an equilibrium flow of the ASP. Constraint (\ref{multi_aspm_ne_flow_general}) ensures that the ASP achieves an equilibrium flow, derived from
the generalized condition in (\ref{utility_spm_multi_001}), which extends the equilibrium criterion in (\ref{ef_homogeneous}) to the multi-path case.
\begin{equation}\label{utility_spm_multi_001}
\Theta_i(P_j,\widehat{x}_1,...,\widehat{x}_k)=\Theta_i(P_{j+1},\widehat{x}_1,...,\widehat{x}_k),
\end{equation}
\begin{table*}
\begin{subequations}
\begin{align}
\mbox{\textbf{Problem \RomanNumeralCaps{2}:}}&\quad (\tau^*,\theta^*)=\arg\max\limits_{\{\tau\geq 0,\theta\in [\min\limits_{j\in[k]}c_j,\max\limits_{j\in[k]}c_j]\}} \min\limits_{\{\widehat{x}_{\omega+1},...,\widehat{x}_{k}\in [0,1]\bigl| \sum\limits_{i=1}^{\omega} \widehat{x}_i+\sum\limits_{j=\omega+1}^k \widehat{x}_j=1\}} \sum\limits_{i=1}^m \sum\limits_{l=1}^{k}\eta_i w_{il}U(\widehat{x}_l)-c_{\omega+1}-\tau \label{obj_multiple_dspproblem} 
\\
\mbox{\textbf{s.t.}}&\quad
\widehat{x}_i=\frac{c_i\cdot\sum\limits_{j=\omega+1}^k\widehat{x}_j\cdot (\tau+c_{\omega+1}-c_j)}{(c_i-c_{\omega+1}-\tau)\cdot\sum\limits_{l=1}^{\omega}c_l}, \forall{i\in\{1,...,\omega\}} {\rm\; where\;}\omega\triangleq \arg\min\limits_{c_t\geq \theta}\{c_t\},
\label{multi_aspm_ne_flow_general}
\end{align}
\end{subequations}
\end{table*}
%
\begin{algorithm}[t]
\caption{First-tier side-payment for ASP}\label{ALG_pre_side_payment}
\SetKwInOut{Input}{Input}
\SetKwInOut{Output}{Output}
\Input{${\bm c}=\{c_1,c_2,...,c_k\}$, $\mathcal{W}=\{{\bm W}_1,...,{\bm W}_m\}$ wherein each ${\bm W}_i=\{W_{i1},...,W_{ik}\}$, and ${\bm \eta}=\{\eta_1,...,\eta_m\}$.}
\Output{charges and rewards for cost rearrangements.}
Sort paths $P_j\in \mathcal{P}$ in non-increasing order of their utility weights (i.e.,  $[\sum\limits_{i=1}^m \eta_iw_{ij}]$), resulting in sequence $({P}_1,...,{P}_k)$ with their respective costs $(c_1,...,c_k)$.\\
Sort costs $\{c_1,c_2,...,c_k\}$ in non-increasing order resulting in sequence $(c_1',c_2',...,c_k')$.\\
\ForAll{${P}_j\in ({P}_1,...,{P}_k)$}{
\eIf{$c_j>c_j'$}{
 $c_j\leftarrow c_j-(c_j-c_j')$, i.e., charge a user along path ${P}_j$ a payment of ($c_j-c_j'$) as an indirect penalty.
 }
{$c_j\leftarrow c_j+(c_j'-c_j)$, i.e., reward a user along path ${P}_j$ a payment of ($c_j'-c_j$) as an incentive. 
}
}
\end{algorithm}

Due to its Line~1 of path rearrangement, Algorithm~\ref{ALG_pre_side_payment} ensures that path $P_1$ in Mechanism~\ref{m_a_spm} attains the maximum utility weight among all paths. Thereby, one may expect to steer a certain portion of users in an equilibrium flow towards $P_1$ for better social welfare. Indeed, our following Theorem~\ref{new_poa_for_spm} confirms this intuition. As implied by Constraint (\ref{multi_aspm_ne_flow_general}), any equilibrium flow on a charged path must contribute a positive portion of users to the flow on path $P_1$. Leveraging this insight, our Mechanism~\ref{m_a_spm} allocates rewards exclusively to users on path $P_1$, encouraging higher participation along this high-utility path. This redistribution is governed by two thresholds, as formalized in Theorem~\ref{new_poa_for_spm}. 
\begin{theorem}\label{new_poa_for_spm}
In general multi-path routing games, Mechanism~\ref{m_a_spm} (ASP) with its decision thresholds $\theta^* = \frac{c_1+c_2}{2}$ and
\begin{equation}\label{new_tau}
    \tau^* = \left\{\begin{matrix}
0, &{\rm if\;}c_1+c_3-2c_2\leq 0 \\ 
 \frac{c_1+c_3}{2}-c_2,&{\rm if\;}c_1+c_3-2c_2>0 
\end{matrix}\right.
\end{equation}
satisfies IC, IR, and BB, and guarantees a PoA of at least $\frac{1}{2}$ in $O(k\log k)$ time.
\end{theorem}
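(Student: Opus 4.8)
The plan is to verify the three mechanism-design properties (IC, IR, BB) and the efficiency guarantee $PoA \geq \tfrac{1}{2}$ separately, reducing the multi-path analysis to the two-path warm-up wherever possible. For \textbf{BB}, I would argue that tier-one side payments net to zero by the cost-conservation identity already established in the paragraph preceding Algorithm~\ref{ALG_pre_side_payment}, and that tier-two payments net to zero because the total charge $\Gamma = \sum_{j=\omega+1}^k x_j(\tau+c_{\omega+1}-c_j)$ is by construction redistributed in full to the rewarded paths $\{P_1,\dots,P_\omega\}$. For \textbf{IC} and \textbf{IR}, I would note that ASP only alters \emph{effective} path costs through additive charges/rewards while leaving each user's realized utility $\mathcal{U}_i$ intact; hence the equilibrium indifference condition (\ref{utility_spm_multi_001}) holds type-by-type, and I would lean on the same $\tau \leq U(\tfrac{1}{2})$-style argument used in Lemma~\ref{condition_both_ir_ic} to confirm nonnegative payoffs under the chosen $\tau^*$.

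The core of the proof is the $PoA \geq \tfrac{1}{2}$ bound, and here the strategy is to exploit the tier-one rearrangement (Line~1 of Algorithm~\ref{ALG_pre_side_payment}), which guarantees that $P_1$ carries the maximum social utility weight, i.e.\ $\sum_{i} \eta_i w_{i1} = \max_{j}\sum_{i}\eta_i w_{ij}$. First I would set $\theta^* = \tfrac{c_1+c_2}{2}$, forcing $\omega = 1$ so that the \emph{only} rewarded path is $P_1$ and every charged path must feed a positive flow into $P_1$ via Constraint (\ref{multi_aspm_ne_flow_general}). The key quantitative step is to show the equilibrium flow satisfies $\widehat{x}_1 \geq \tfrac{1}{2}$; I would establish this through a telescoping chain of inequalities: using (\ref{multi_aspm_ne_flow_general}) with $\omega=1$ gives $\widehat{x}_1 = \frac{\sum_{j=2}^k \widehat{x}_j(\tau^*+c_2-c_j)}{c_1-c_2-\tau^*}$, and I would bound each numerator term from below by replacing $c_j$ with the largest admissible cost (leveraging the two-branch choice of $\tau^*$ in (\ref{new_tau}) to keep the ratio $\geq 1$), thereby deducing $\widehat{x}_1 \geq \sum_{j=2}^k \widehat{x}_j = 1-\widehat{x}_1$.

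With $\widehat{x}_1 \geq \tfrac{1}{2}$ in hand, I would lower-bound the equilibrium social welfare by the single dominant term, $SW(\widehat{x}) \geq (\sum_i \eta_i w_{i1}) U(\widehat{x}_1) \geq \tfrac{1}{2}(\sum_i \eta_i w_{i1}) U(1)$, invoking monotonicity of $U$ together with the negligible-cost assumption $\frac{c_j}{U(1)} \to 0$ to discard the cost terms $c_{\omega+1}+\tau^*$. For the denominator I would upper-bound the social optimum by $SW^* \leq \max_{j}(\sum_i \eta_i w_{ij}) U(1) = (\sum_i \eta_i w_{i1}) U(1)$, where the final equality is exactly the payoff of the tier-one rearrangement ensuring $P_1$ holds the top utility weight. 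Dividing the two bounds yields $PoA \geq \tfrac{1}{2}$. The complexity claim $O(k\log k)$ follows since Algorithm~\ref{ALG_pre_side_payment} is dominated by two sorts, and evaluating the closed-form $(\tau^*,\theta^*)$ in (\ref{new_tau}) costs only $O(1)$ after sorting.

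The main obstacle I anticipate is the rigorous justification of $\widehat{x}_1 \geq \tfrac{1}{2}$: the telescoping inequality must be checked carefully because the sign and magnitude of each factor $(\tau^*+c_2-c_j)$ depend on how $c_j$ compares to $c_2$ and $c_3$, and the denominator $c_1-c_2-\tau^*$ must be shown positive under both branches of (\ref{new_tau}). The case split in the definition of $\tau^*$ is precisely engineered so that, whether or not $c_1+c_3-2c_2>0$, the resulting ratio $\frac{\tau^*+c_2-c_j}{c_1-c_2-\tau^*}$ remains at least the corresponding weight in the convex combination that sums $\widehat{x}_j$; verifying this monotone bound uniformly over all charged paths $j\in\{2,\dots,k\}$ is the delicate part, and I would handle it by reducing to the extremal comparison with $c_3$ (the largest charged cost) as sketched in the telescoping display.
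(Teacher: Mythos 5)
Your proposal follows essentially the same route as the paper's own proof: you force $\omega=1$ via $\theta^*=\frac{c_1+c_2}{2}$, derive $\widehat{x}_1\geq\frac{1}{2}$ from the equilibrium constraint (\ref{multi_aspm_ne_flow_general}) through the same case split on the sign of $c_1+c_3-2c_2$, and then compare the lower bound $\frac{1}{2}\bigl(\sum_i\eta_iw_{i1}\bigr)U(1)$ on $SW_{ASP}$ against the upper bound $\max_j\bigl\{\sum_i\eta_iw_{ij}\bigr\}U(1)$ on $SW^*$, with BB via the cost-conservation identity, IC/IR by reduction to the two-path analysis, and the $O(k\log k)$ complexity from the two sorts, exactly as the paper does. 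The ``delicate part'' you identify --- verifying $\tau^*+c_2-c_j>c_1-c_2-\tau^*>0$ uniformly over the charged paths by reducing to the comparison with $c_3$ --- is precisely the content of the paper's Lemma~\ref{prop_for_new_spa_poa}, so your sketch and the published argument coincide step for step.
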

\begin{proof}
\textbf{Time complexity}: the running time of  Mechanism~\ref{m_a_spm} with $\theta^* = \frac{c_1+c_2}{2}$ and $\tau^*$ in (\ref{new_tau}) is dominated by Algorithm~\ref{ALG_pre_side_payment} for the first-tier side-payment, wherein the first two sorting steps runs in $O(k \log k)$ time, respectively, and each iteration runs in $O(1)$. In total, ASP runs in $O(k\log k)$ time. \textbf{Desiable properties}: By applying a similar analysis as in Proposition~\ref{thm_poa_spa_hetero}, one can easily find that ASP satisfies IR, IC and BB. The following of this proof focuses on \textbf{PoA efficiency}: First, $\theta^* = \frac{c_1+c_2}{2}\in(c_1,c_2)$ implies $\omega=1$ since $c_1>\theta^*>c_2$. Constraint (\ref{multi_aspm_ne_flow_general}) in Problem \RomanNumeralCaps{2} is specified as:
\begin{equation}\label{new_NE_conditionfor_sap}
\widehat{x}_1=\frac{\sum\limits_{j=2}^k\widehat{x}_j\cdot (\tau^*+c_{2}-c_j)}{c_1-c_{2}-\tau^*}.
    \end{equation}
To facilitate our subsequent proof, we give the following Lemma~\ref{prop_for_new_spa_poa} which reveals some important relations .
\begin{lemma}\label{prop_for_new_spa_poa}
By choosing $(\tau^*,\theta^*)$ as in Theorem~\ref{new_poa_for_spm}, it holds for any $j\in\{2,...,k\}$ that $\tau^*+c_2-c_j>c_1-c_2-\tau^*$ and $c_1-c_2-\tau^*>0$.
\end{lemma}
\begin{proof}[Proof of Lemma~\ref{prop_for_new_spa_poa}]
We discuss two cases.

\noindent \textbf{Case 1}. ($c_1+c_3-2c_2\leq 0$). We have $\tau^*=0$ according to (\ref{new_tau}). Hence, $c_1-c_2-\tau^*>0$. For any $j\in\{2,...k\}$, we get
\begin{equation*}
\begin{split}
    \tau^*+c_2-c_j&>\tau^*+c_2-c_3=c_2-c_3>c_1-c_2,
\end{split}
\end{equation*}
in which the first inequality holds because $c_{j-1}>c_j$ for each $j\in\{2,...,k\}$, the equation holds by 
$\tau^*=0$, and the last inequality follows from the base condition $c_1+c_3-2c_2\leq 0$ of this case.

\noindent \textbf{Case 2}. ($c_1+c_3-2c_2> 0$).
According to (\ref{new_tau}), we now have $\tau^* = \frac{c_1+c_3-2c_2}{2}$, which implies by its reorganization that
\begin{equation}\label{basic_case2_lemma9}
    c_1-c_2-\tau^* = \frac{c_1-c_3}{2}>0,
\end{equation}
where the inequality holds as the subroutine Algorithm~\ref{ALG_pre_side_payment} of Mechanism~\ref{m_a_spm} outputs $c_1>c_3$. Therefore, for any $j\in\{2,...,k\}$, the following holds
$ c_2-c_j+\tau^*>c_2-c_3+\tau^*= \frac{c_1-c_3}{2} = c_1-c_2-\tau^*$. This complete proving Lemma~\ref{prop_for_new_spa_poa}.
\end{proof}
Now, we proceed with proving Theorem~\ref{new_tau}. 
Applying Lemma~\ref{prop_for_new_spa_poa} to (\ref{new_NE_conditionfor_sap}) gives us,
\begin{equation*}
    \widehat{x}_1\geq \frac{\sum\limits_{j=2}^k\widehat{x}_j\cdot (\tau^*+c_{2}-c_3)}{(c_1-c_{2}-\tau^*)}\geq \sum\limits_{j=2}^k\widehat{x}_j,
\end{equation*}
in which the first inequality follows since $c_{j-1}>c_j$ holds for each $j\in\{2,...,k\}$, and the second inequality is due to the first family of inequalities in Lemma~\ref{prop_for_new_spa_poa}.

Since $\sum\limits_{j=1}^kx_j=1$, we have $\widehat{x}_1= 1-\sum\limits_{j=2}^k\widehat{x}_j\geq 1-\widehat{x}_1$, which implies that any NE flow in our mechanism admits 
\begin{equation}\label{ne_flow_on_path_1_sap_new}
    \widehat{x}_1\geq \frac{1}{2}.
\end{equation}
By applying (\ref{ne_flow_on_path_1_sap_new}) to the Objective (\ref{obj_multiple_dspproblem}), we can lower bound the social welfare under ASP as follows:
\begin{equation}\label{warmup_bounding_objective01_asp}
\begin{split}
    &{SW_{ASP}}=\min\limits_{\{\widehat{x}_{2},\widehat{x}_{3},...,\widehat{x}_{k}\}} \Big\{\sum\limits_{i=1}^m \sum\limits_{l=1}^{k}\eta_i w_{il}U(\widehat{x}_l)-c_{2}-\tau^*\Big\}\\
       &\geq  \min\limits_{\{\widehat{x}_{2},\widehat{x}_{3},...,\widehat{x}_{k}\}}\Big\{\sum\limits_{i=1}^m\eta_iw_{i1} U(\widehat{x}_1)+\sum\limits_{l=2}^{k}\sum\limits_{i=1}^m \eta_i w_{il}U(\widehat{x}_l)-c_1\Big\}\\
    &\geq  \min\limits_{\{\widehat{x}_{2},\widehat{x}_{3},...,\widehat{x}_{k}\}}\Big\{\sum\limits_{i=1}^m\eta_iw_{i1} U(\widehat{x}_1)\Big\}\\
    &\geq \sum\limits_{i=1}^m\eta_iw_{i1} U(\frac{1}{2})\geq \frac{\sum\limits_{i=1}^m\eta_iw_{i1}}{2}U(1),
\end{split}
\end{equation}
where the first inequality holds by $ c_2+\tau^*\leq c_2+\frac{c_1+c_3}{2}-c_2=\frac{c_1+c_3}{2}$,
the second inequality holds because $U(1)>>c_1$, the third inequality holds by (\ref{ne_flow_on_path_1_sap_new}), and the last inequality is due to the concavity of function $U(x)$. 

On the social optimum $SW^*$, we find 
\begin{equation}\label{warmup_bounding_objective01_opt}
    \begin{split}
      { SW^*} &= \sum\limits_{i=1}^m\sum\limits_{j=1}^k \eta_iw_{ij}U(x_j)-\sum\limits_{j=1}^kx_jc_j\\
      &<  \sum\limits_{i=1}^m\sum\limits_{j=1}^k \eta_iw_{ij}U(x_j)\\
        &=\sum\limits_{j=1}^k U(x_j)[\sum\limits_{i=1}^m \eta_iw_{ij}]\\
        &\leq \sum\limits_{j=1}^k U(x_j)\cdot \max\limits_{j\in\{1,...,k\}}\{\sum\limits_{i=1}^m\eta_iw_{ij}\}\\
        &\leq \max\limits_{j\in\{1,...,k\}}\{\sum\limits_{i=1}^m\eta_iw_{ij}\} U(1)
    \end{split}
\end{equation}
in which the last inequality is due to the following:
\begin{equation*}
\begin{split}
    \sum\limits_{j=1}^k U(x_j)\leq k\cdot U(\frac{\sum\limits_{j=1}^k x_j}{k})\leq U(\sum\limits_{j=1}^k x_j) = U(1).
\end{split}
\end{equation*}
where the first and the second inequalities are due to Jensen's inequality and $U(x)$'s concavity, respectively, and the equation holds since $\sum\limits_{j=1}^k x_j=1$

By comparing (\ref{warmup_bounding_objective01_asp}) and (\ref{warmup_bounding_objective01_opt}), the PoA of our ASP mechanism in a general multi-path routing game follows:
\begin{equation}
\begin{split}
\frac{{SW_{ASP}}}{SW^*}&\geq \frac{\frac{\sum\limits_{i=1}^m\eta_iw_{i1}}{2}U(1)}{\max\limits_{j\in\{1,...,k\}}\{\sum\limits_{i=1}^m\eta_iw_{ij}\} U(1)}\\
&\geq \frac{\sum\limits_{i=1}^m\eta_iw_{i1}}{2\cdot \max\limits_{j\in\{1,...,k\}}\{\sum\limits_{i=1}^m\eta_iw_{ij}\}}\geq \frac{1}{2}
\end{split}
\end{equation}
where the last inequality holds since Line 1 of our Algorithm~\ref{ALG_pre_side_payment} ensures that path $P_1$ in Mechanism~\ref{m_a_spm} has the largest utility weight among all, i.e., $\sum\limits_{i=1}^m\eta_iw_{i1}= \max\limits_{j\in\{1,...,k\}}\{\sum\limits_{i=1}^m\eta_iw_{ij}\}$.
\end{proof}
Theorem~\ref{new_poa_for_spm} shows that Mechanism~\ref{m_a_spm} (ASP) guarantees at least half of the social optimum in the worst case, while maintaining polynomial-time complexity. This improves upon the earlier Mechanism~\ref{m_irm} (AIR), which offers a worst-case PoA of only $\frac{1}{4}$ with even higher computational cost. Importantly, the PoA guarantee in Theorem~\ref{new_poa_for_spm} holds for any utility function $U(x)$ that is concave and monotonically increasing in user flow $x$. Moreover, given a concrete expression of the utility function $U(x)$, our ASP mechanism can even surpass the $\frac{1}{2}$ PoA bound by further optimizing the parameters $\tau$ and $\theta$ via Problem (\ref{obj_multiple_dspproblem})–(\ref{multi_aspm_ne_flow_general}).
As demonstrated by our experiments using a real-world dataset in the subsequent section, the average-case performance of our Mechanism~\ref{m_a_spm} (ASP) approaches the social optimum more closely than its theoretical guarantee in the above Theorem~\ref{new_poa_for_spm}. 
%
%
\section{Numerical Experiments}\label{sec_simulation}
This section evaluates our mechanisms' empirical performances through experiments using a real-world dataset \cite{Gowalladata} of Points-of-Interest (PoIs) in Los Angeles (LA). 
\begin{figure}[t]
    \centering
    \includegraphics[width=8.5cm]{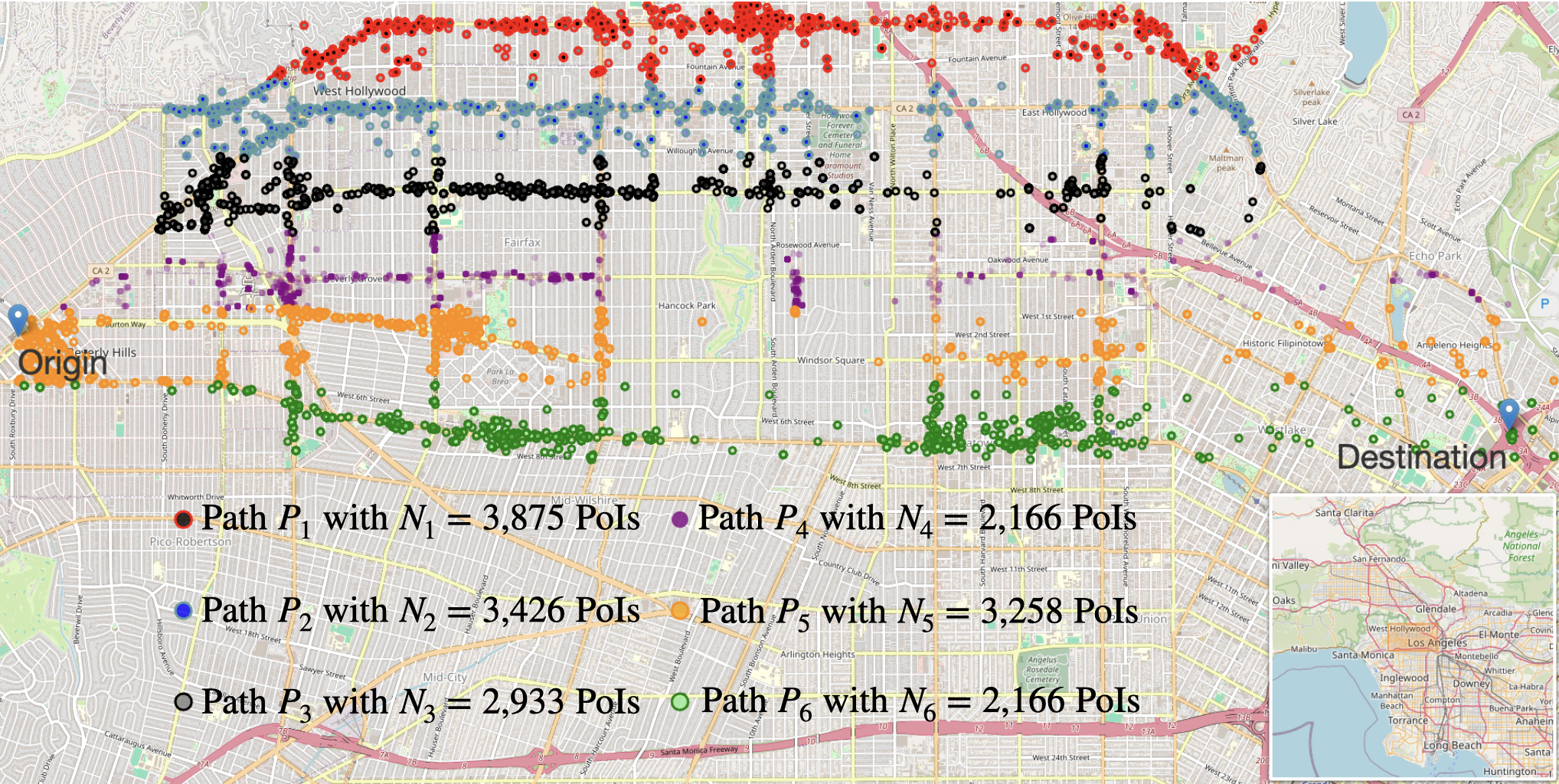}
    \caption{PoI distribution on the target region in Los Angeles specified by a latitude range of [34° 3' 37" N to 34° 5' 59" N] and a longitude range of [118° 14' 28" W to 118° 24' 18" W], which includes 17,724 PoI pieces along 6 parallel pathways for vehicles to harvest from the common origin at Beverley Hills to the downtown destination.}
    \label{poi_map}
\end{figure}

Specifically, our subsequent experiments focus on the social PoI-sharing scenario in Central LA, a historical urban region of LA known for its abundance of PoIs such as landmarks, restaurants, hotels, and attractions. This region spans a latitude range from 34° 3' 37" N to 34° 5' 59" N and a longitude range from 118° 14' 28" W to 118° 24' 18" W, including a total 17,724 of PoIs according to the dataset \cite{Gowalladata}. As visualized in Fig.~\ref{poi_map}, these 17,724 PoIs are distributed across $k=6$ distinct pathways linking a suburb area at Beverley Hills (34° 4' 15" N, 118° 24' 18" W) to a downtown area in Chinatown LA (34° 3' 46" N, 118° 14' 57" W). 
The number of PoIs on each pathway is as follows:  $N_1 = 3,875$ PoIs on path $P_1$ (Sunset Boulevard) in red, $N_2 = 3,426$ PoIs on path $P_2$ (Santa Monica Boulevard) in blue, $N_3 = 2,933$
PoIs on path $P_3$ (the Melrose Boulevard) in black,
$N_4 = 2,116$ PoIs on path $P_4$ (the Beverly Boulevard) in purple, $N_5 = 3,258$ PoIs on path $P_5$ (the West 3rd Street) in orange, and $N_6= 2,116$ PoIs on path $P_6$ (the WilShire Boulevard) in green. 

Given $k=6$ paths, we consider a total number $M = 2,000$ of $m=6$ preference types of users with their type proportions $\eta_i$ and preference proportions $w_{ij}$ to different paths sampled from a random distribution over the normalized interval $(0,1)$, respectively. These parameters adhere to $\sum\limits_{i=1}^m \eta_i = 1$ and $\sum\limits_{j=1}^k w_{ij} =1$ for each user type $i$. To obtain a concrete function $U(x)$ to quantify the amount of PoIs collected by a flow $x$ of users, we suppose that each PoI piece on a path $P_j\in\{P_1,...,P_6\}$ has an equal probability $\frac{1}{N_j}$ to be collected by a user. Consequently, the probability that a PoI piece along path $P_j$ is collected by at least one user of a flow $x$ of $Mx$ users is given by $1-(1-\frac{1}{N_j})^{Mx}$. Thus, a flow $x$ of $Mx$ users can collectively gather an average number $N_j(1-(1-\frac{1}{N_j})^{Mx})$ of PoI pieces, readily giving us the utility function $U_j(x) =N_j\cdot [1-(1-\frac{1}{N_j})^{Mx}]$. This utility function is concave and increasing in $x$ and is also adopted in the literature (e.g., \cite{li2017dynamic,zhang2020efficient}).


 For the multi-path routing game considered in this paper, a feasible mechanism is required to produce an equilibrium flow while satisfying incentive compatibility (IC), individual rationality (IR), and budget balance (BB). To the best of our knowledge, apart from our proposed ASP and AIR mechanisms, no existing mechanisms are known to meet all these criteria. In light of this, we adopt the optimal social welfare as our primary benchmark.
Since finding the social optimum hardly admits a polynomial-time algorithm \cite{li2024survey}, we instead use the maximum social welfare achievable across all possible user flows as an upper bound on the social optimum, serving as our de facto key benchmark for comparison. Denote $N_{\max}=\max\limits_{1\leq j\leq 6} \{N_j\}$ as the maximum number of PoI pieces on a pathway in the experiment. In the following Lemma~\ref{ub_lemma}, we provide an upper bound ($UB$) on the maximum social welfare, i.e., $UB\geq SW^*$ . Further, we apply ratios $\frac{SW_{AIR}}{UB}$ and $\frac{SW_{ASP}}{UB}$ as our evaluation metrics to examine the performances of our mechanisms AIR and ASP, respectively, which indeed provide lower-bound guarantees on our mechanisms' social welfare.
\begin{lemma}\label{ub_lemma}
The optimal social welfare $SW^
*$ for the normalized unit mass of users does not exceed $${  UB}=k\max\limits_{j\in[k]}\Bigl\{\sum\limits_{i=1}^m\eta_iw_{ij}\Bigr\}N_{\max}[1-(1-\frac{1}{N_{\max}})^{\frac{M}{k}}]-\frac{\sum\limits_{j=1}^kc_j}{k}.$$
\end{lemma}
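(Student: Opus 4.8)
The plan is to separate $SW$ into its information-utility part and its travel-cost part and bound each. Writing the social utility weight as $\alpha_j\triangleq\sum_{i=1}^m\eta_i w_{ij}$, the per-path utility as $U_j(x)=N_j[1-(1-\frac{1}{N_j})^{Mx}]$, and $\alpha_{\max}\triangleq\max_{j\in[k]}\alpha_j$, we have $SW(x_1,\dots,x_k)=\sum_{j=1}^k\alpha_j U_j(x_j)-\sum_{j=1}^k x_j c_j$ for every feasible flow, so it suffices to upper bound this expression uniformly over the simplex $\{x_j\ge0,\ \sum_j x_j=1\}$.

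First I would bound the utility part in three steps. (a) Replace each weight by the largest one, so $\sum_j\alpha_j U_j(x_j)\le\alpha_{\max}\sum_j U_j(x_j)$. (b) Replace each $N_j$ by $N_{\max}$: this needs the monotonicity fact that, for every fixed exponent $t=Mx_j\ge0$, the map $N\mapsto N[1-(1-\frac1N)^t]$ is nondecreasing, which I would justify through its balls-into-bins reading (it is the expected number of occupied bins when $t$ balls land in $N$ bins, and adding bins only reduces collisions) or, for non-integer exponents, by a direct sign check of the derivative. This gives $U_j(x_j)\le\phi(x_j)$ with $\phi(x)\triangleq N_{\max}[1-(1-\frac{1}{N_{\max}})^{Mx}]$. (c) Since $\phi$ is concave and increasing in $x$ and $\sum_j x_j=1$, Jensen's inequality yields $\sum_j\phi(x_j)\le k\,\phi(\frac1k\sum_j x_j)=k\,\phi(\frac1k)$. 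Chaining (a)--(c) bounds the utility part by $k\alpha_{\max}N_{\max}[1-(1-\frac{1}{N_{\max}})^{M/k}]$, which is exactly the first term of $UB$ and holds for every flow, in particular for the optimal one.

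The cost term is where I expect the main obstacle. All three inequalities in (a)--(c) become tight simultaneously only at the balanced flow $x_j=\frac1k$, at which the travel-cost term equals $\frac1k\sum_{j=1}^k c_j$, which is precisely the quantity subtracted in $UB$. The difficulty is that the utility upper bound is attained at the balanced flow whereas the cost $-\sum_j x_j c_j$ is made largest by concentrating all mass on the cheapest (zero-cost) path, so the two objectives pull the flow in opposite directions and one cannot simply attach the balanced-flow cost to the utility bound. To close this gap I would invoke the standing assumption $c_j/U(1)\to0$: since the costs are of strictly lower order than the utilities, the welfare-maximizing flow is asymptotically balanced and the residual discrepancy between $\sum_j\widehat{x}_j c_j$ and $\frac1k\sum_j c_j$ is of order $\max_j c_j$, hence negligible relative to $U(1)$. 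Under this negligible-cost regime the balanced-flow cost $\frac1k\sum_j c_j$ may legitimately be subtracted, yielding $SW^*\le UB$; the fully universal (and slightly weaker) version of the bound would instead subtract only $\min_j c_j=0$, and it is exactly the assumption $c_j/U(1)\to0$ that upgrades this to the sharper constant stated in the lemma.
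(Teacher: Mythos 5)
Your treatment of the utility term is correct and is the natural route to the first term of $UB$: replacing every utility weight by $\alpha_{\max}\triangleq\max_{j\in[k]}\sum_{i=1}^m\eta_i w_{ij}$, replacing every $N_j$ by $N_{\max}$ (the occupancy function $N\mapsto N[1-(1-\frac{1}{N})^{t}]$ is indeed nondecreasing in $N$ for every fixed $t\geq 0$), and then applying Jensen's inequality to the concave increasing function $\phi(x)\triangleq N_{\max}[1-(1-\frac{1}{N_{\max}})^{Mx}]$ over the simplex gives $\sum_{j=1}^k\phi(x_j)\leq k\,\phi(\frac{1}{k})$, hence the bound $k\alpha_{\max}N_{\max}[1-(1-\frac{1}{N_{\max}})^{M/k}]$ on the information part of the welfare for every feasible flow.

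The genuine gap is in the cost term, and while your diagnosis of the difficulty is accurate, the proposed repair does not close it. First, the claim that the welfare-maximizing flow is ``asymptotically balanced'' under $c_j/U(1)\to 0$ is false in general: even with all costs set to zero, the maximizer of $\sum_{j}(\sum_{i}\eta_i w_{ij})U_j(x_j)$ satisfies the first-order conditions $(\sum_{i}\eta_i w_{ij})U_j'(x_j)=\lambda$ on its support, which yields a balanced flow only when the weighted marginal utilities happen to coincide; with heterogeneous utility weights and heterogeneous $N_j$ (as in the LA dataset) the optimizer is genuinely unbalanced, so you cannot read off the cost paid at the optimum as $\frac{1}{k}\sum_j c_j$. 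Second, even granting that costs are negligible, your chain of inequalities only delivers $SW^*\leq k\alpha_{\max}\phi(\frac{1}{k})-\sum_j x_j^* c_j$, and since $\min_j c_j=0$ the subtracted quantity can be as small as $0$; arriving at the stated $UB$, which subtracts the strictly positive $\frac{1}{k}\sum_{j}c_j$, requires showing that the slack accumulated in your steps (a)--(c) at the true optimizer is at least $\frac{1}{k}\sum_j c_j-\sum_j x_j^* c_j$. This is an inequality about two exact finite quantities and does not follow from an asymptotic negligibility statement --- in the experiments $\max_j c_j$ reaches $100$ against utilities on the order of $10^3$, so the correction term is not vacuous. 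To make the argument airtight you would need to either quantify the concavity gap of $\phi$ at the unbalanced optimum, or prove directly that the optimal flow pays average cost at least $\frac{1}{k}\sum_j c_j$, or settle for the weaker bound that drops the cost correction entirely (as you yourself note); as written, the final subtraction is asserted rather than proved.
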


Since our AIR and ASP mechanisms heavily depend on the highest utility weight $\max\limits_{j\in[k]}{\sum\limits_{i=1}^m\eta_iw_{ij}}$ and travel cost among paths, we examine the empirical performances of our AIR and ASP mechanisms by varying the maximum utility weight from 0.3 to 0.75 while the maximum path cost from 10 units to 100 units of gas consumption, respectively.
\begin{figure}[t]
   \centering
 \includegraphics[width = 8.5cm]{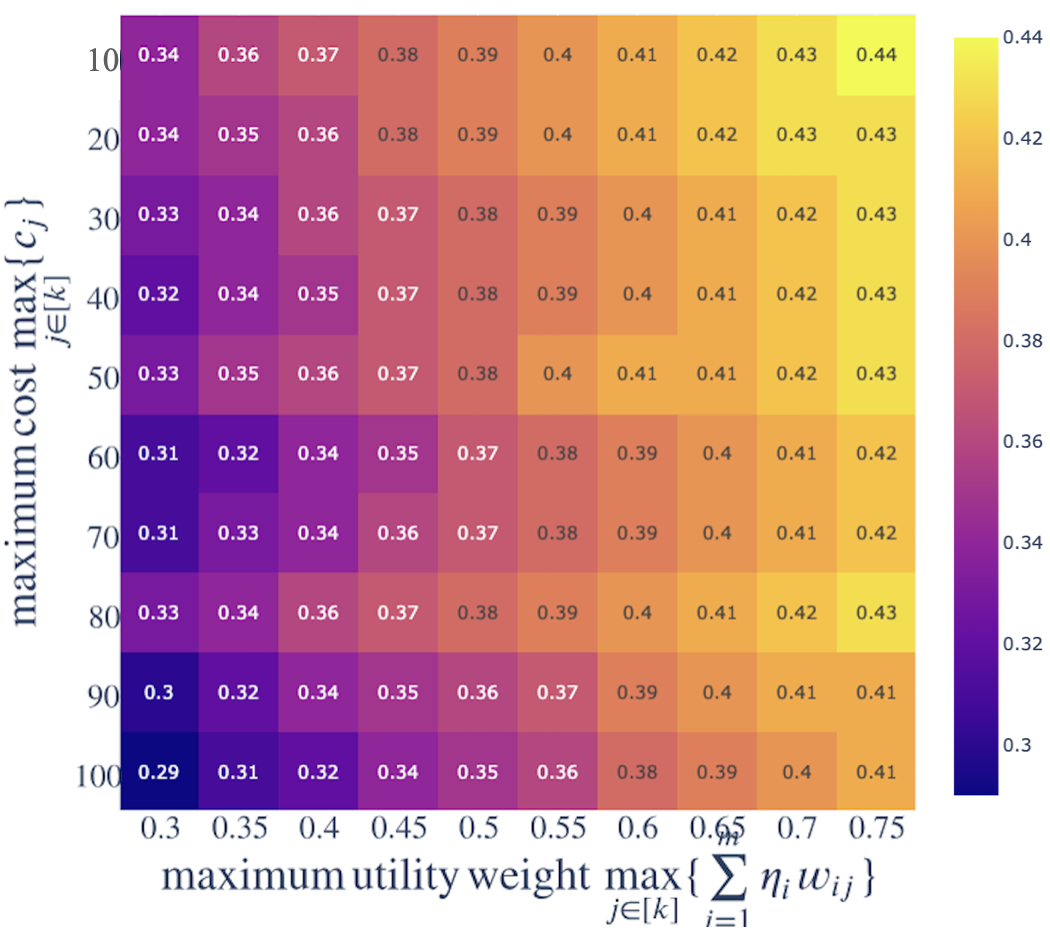}
    \caption{Heatmap for mechanism AIR's experimental results: the average ratio's lower bound $\frac{SW_{AIR}}{UB}$ with \textit{UB} given in Lemma~\ref{ub_lemma}, \textit{versus} the maximum utility weight (i.e., $\max\limits_{j\in[k]}\{\sum\limits_{i=1}^m \eta_iw_{ij}\}$) and the maximum path cost $\max\limits_{j\in[k]}\{c_j\}$.}\label{result_air}
   \end{figure}
\begin{figure}
   \centering
    \includegraphics[width = 8.5cm]{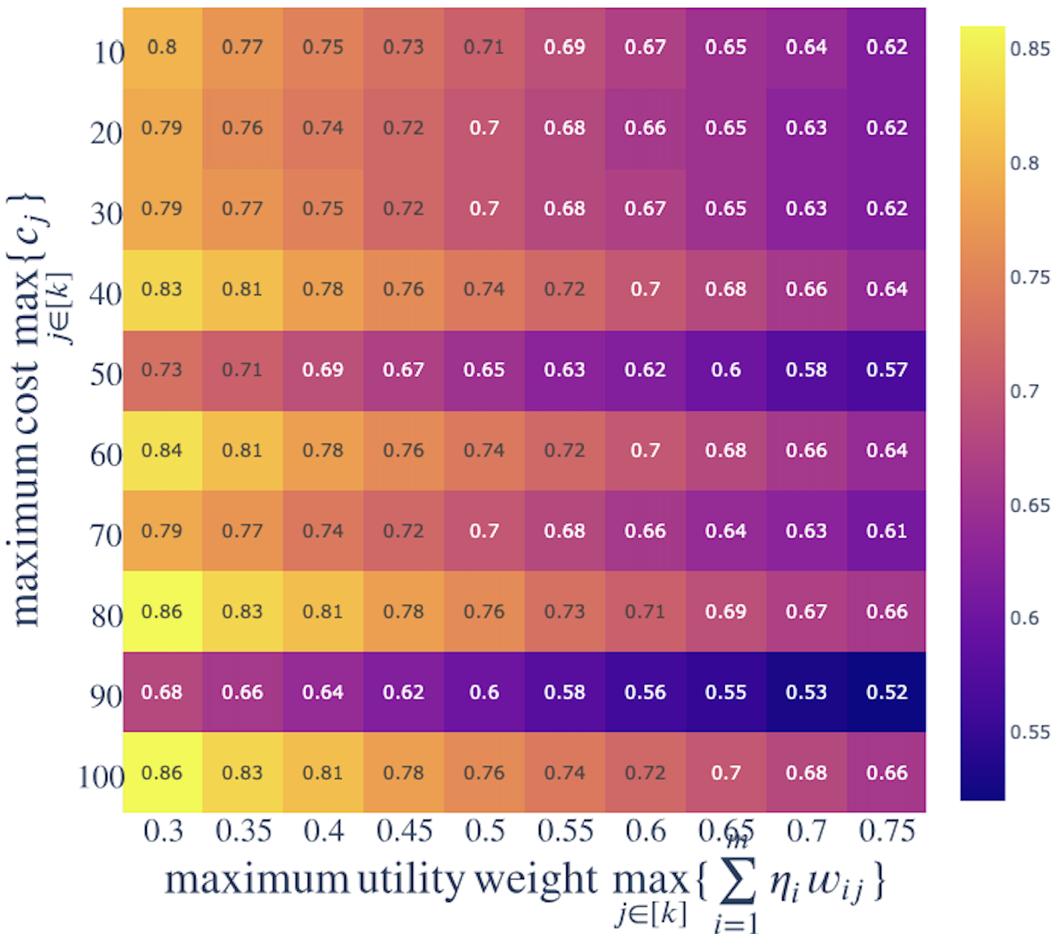}
    \caption{Heatmap for mechanism ASP's experimental results: the average ratio's lower bound $\frac{SW_{ASP}}{UB}$ with \textit{UB} given in Lemma~\ref{ub_lemma}, \textit{versus} the maximum utility weight (i.e., $\max\limits_{j\in[k]}\{\sum\limits_{i=1}^m \eta_iw_{ij}\}$) and the maximum path cost $\max\limits_{j\in[k]}\{c_j\}$.} \label{result_asp}
\end{figure}
%

Heatmaps in Fig.~\ref{result_air} and Fig.~\ref{result_asp} summarize the empirical performances for our mechanisms AIR and ASP \textit{versus} the above $UB$ of the optimal social welfare, respectively. Overall, Fig.~\ref{result_air} and Fig.~\ref{result_asp} show that our AIR and ASP achieve empirical performances significantly better than the worst-case PoA guarantees $\frac{1}{4}$ and $\frac{1}{2}$ as stated in Theorems~\ref{air_multipath_poa} and~\ref{new_poa_for_spm}, respectively. As the maximum utility weight or cost varies, Fig.~\ref{result_air} shows that our AIR practically achieves performances ranging from 29\% to 44\% of the upper bound, and consequently, of the maximum social welfare, while our ASP in Fig.~\ref{result_asp} further improves and achieves near-optimal performance ranging from about 60\% to more about 80\% of the upper bound and also of the maximum social welfare. We note that the ASP mechanism with extra pricing/billing support always outperforms the AIR mechanism in terms of empirical performance. This also corroborates our PoA analysis in Theorems~\ref{air_multipath_poa} and~\ref{new_poa_for_spm}.

Our experimental results presented in Fig.~\ref{result_air} and Fig.~\ref{result_asp} thoroughly analyze the sensitivity of our mechanisms to key parameters, including the maximum path cost $c$ and the maximum utility weight $\sum_{i=1}^m \eta_i w_{ij}$, which indeed have the greatest impact on the performance of our mechanisms. 
The experimental results in Fig.~\ref{result_air} and Fig.~\ref{result_asp} demonstrate the robustness of mechanisms with respect to variations in these parameters. Taking Fig.~\ref{result_asp} of ASP’s experimental results as an example:   When examining a fixed row in the heat map—specifically, the top row labeled $\max\{c\} = 10$, we observe that the ASP mechanism consistently achieves a welfare outcome of at least 62\% of the optimum as the maximum utility weight $\sum_{i=1}^m \eta_i w_{ij}$ varies from $0.3$ to $0.7$. On the other hand, when examining a fixed column—for instance, the leftmost one where $\max\limits_{j\in[k]}\sum_{i=1}^m \eta_i w_{ij} = 0.3$, the results indicate that the ASP mechanism consistently achieves at least 73\% of the optimal social welfare, regardless of how the maximum cost $\max_{j \in [k]}\{c_j\}$ varies within the range $[10, 100]$.

Particularly, our AIR mechanism consistently achieves improved performance as the maximum utility weight increases horizontally in Fig.~\ref{result_air}. This is due to our design in penalty restrictions of AIR in Algorithm~\ref{general_incomplete_air_mechanism}, which incentivizes users to prioritize opting for the path with the largest $\max\limits_{j\in[k]}{\sum\limits_{i\in[m]}\eta_iw_{ij}}$ (i.e., the maximum utility weight of a path). Differently, our ASP mechanism favors a relatively balanced utility weight over paths and performs better for smaller $\max\limits_{j\in[k]}{\sum\limits_{i\in[m]}\eta_iw_{ij}}$ in Fig.~\ref{result_asp}. This is because our design in the side-payments of ASP leads to a balanced equilibrium flow on $k=6$ paths. As $\max\limits_{j\in[k]}{\sum\limits_{i\in[m]}\eta_iw_{ij}}$ decreases and paths' utility weights become balanced, the gap between our social welfare under a guaranteed balanced flow and the maximum social welfare narrows down since our balanced flow can save more utility loss from each pathway. On another note, under a fixed maximum utility weight,
our experimental results also demonstrate that the AIR and ASP mechanisms are both insensitive to path costs even when the cost is not trivial compared to the overall utility.


\begin{figure}[t]
   \centering
 \includegraphics[width = 8cm]{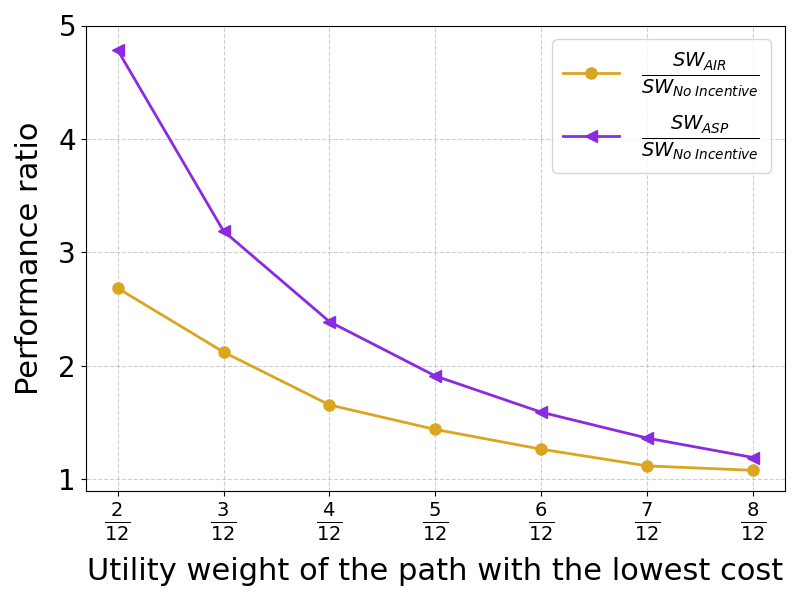}
    \caption{Empirical comparison with the no-incentive benchmark}\label{noincentive_benchmark}
   \end{figure}
   
Besides the above primary experiments comparing our mechanisms with the key benchmark provided in Lemma~\ref{ub_lemma}, we also note that Lemma~\ref{thm_no_incentive_basic} in fact provides an additional benchmark, wherein users, without any incentives, selfishly crowd onto the path with the lowest travel cost.
In view of this, we further compare the empirical performance of our mechanisms against the no-incentive benchmark, which is particularly sensitive to the utility weight of the minimum-cost path. In this experiment, we vary this weight from $\frac{1}{6}$ (representing an average weight) to $\frac{3}{4}$ (indicating a dominant weight). The results are presented in Fig.~\ref{noincentive_benchmark}. Our results show that when the utility weight of the minimum-cost path remains modest, both our ASP and AIR significantly outperform the no-incentive benchmark by up to threefold and fivefold, respectively. As this weight increases to a dominant level, the performance of our mechanisms gradually aligns with that of the no-incentive benchmark. This is because the advantage of PoI diversity diminishes when the utility is overwhelmingly concentrated on a single path.

\section{Concluding Remarks}\label{sec_conclusion}
This paper studies a novel non-atomic routing game with positive network externalities arising from social information sharing, which involves $k\geq 2$ paths and $m\geq 2$ various user types of path preferences. Without any mechanism design, our PoA analysis shows that users' selfish routing choices lead to significantly imbalanced flow distribution, resulting in extremely poor social efficiency with $PoA=0$. To address such huge efficiency loss, we propose both an adaptive information restriction (AIR) mechanism for ease of implementation and an adaptive side-payment mechanism (ASP) for scenarios where transactions for pricing users are further enabled. Our mechanisms protect user privacy by not requiring users' path preferences and satisfy key properties for practical feasibility, including individual rationality (IR), incentive compatibility (IC), and budget balance (BB). Through our rigorous PoA analysis, we show that our AIR and ASP mechanisms both guarantee decent theoretical performances in the worst-case scenarios, in only polynomial time. 
Finally, our theoretical findings are well corroborated by experiments using a real-world dataset. 

\nocite{langley00}
\bibliographystyle{IEEEtran}
\bibliography{mybib}

\begin{thebibliography}{10}
\providecommand{\url}[1]{#1}
\csname url@samestyle\endcsname
\providecommand{\newblock}{\relax}
\providecommand{\bibinfo}[2]{#2}
\providecommand{\BIBentrySTDinterwordspacing}{\spaceskip=0pt\relax}
\providecommand{\BIBentryALTinterwordstretchfactor}{4}
\providecommand{\BIBentryALTinterwordspacing}{\spaceskip=\fontdimen2\font plus
\BIBentryALTinterwordstretchfactor\fontdimen3\font minus \fontdimen4\font\relax}
\providecommand{\BIBforeignlanguage}[2]{{%
\expandafter\ifx\csname l@#1\endcsname\relax
\typeout{** WARNING: IEEEtran.bst: No hyphenation pattern has been}%
\typeout{** loaded for the language `#1'. Using the pattern for}%
\typeout{** the default language instead.}%
\else
\language=\csname l@#1\endcsname
\fi
#2}}
\providecommand{\BIBdecl}{\relax}
\BIBdecl

\bibitem{chang2021mobility}
S.~Chang, E.~Pierson, P.~W. Koh, J.~Gerardin, B.~Redbird, D.~Grusky, and J.~Leskovec, ``Mobility network models of covid-19 explain inequities and inform reopening,'' \emph{Nature}, vol. 589, no. 7840, pp. 82--87, 2021.

\bibitem{vasserman2015implementing}
S.~Vasserman, M.~Feldman, and A.~Hassidim, ``Implementing the wisdom of waze,'' in \emph{Twenty-Fourth International Joint Conference on Artificial Intelligence}, 2015.

\bibitem{cheng2023influence}
X.~Cheng, N.~Li, G.~Rysbayeva, Q.~Yang, and J.~Zhang, ``Influence-aware successive point-of-interest recommendation,'' \emph{World Wide Web}, vol.~26, no.~2, pp. 615--629, 2023.

\bibitem{nikou2014ubiquitous}
S.~Nikou and H.~Bouwman, ``Ubiquitous use of mobile social network services,'' \emph{Telematics and Informatics}, vol.~31, no.~3, pp. 422--433, 2014.

\bibitem{hossain2020ubiquitous}
S.~F.~A. Hossain, Z.~Xi, M.~Nurunnabi, and K.~Hussain, ``Ubiquitous role of social networking in driving m-commerce: evaluating the use of mobile phones for online shopping and payment in the context of trust,'' \emph{Sage Open}, vol.~10, no.~3, p. 2158244020939536, 2020.

\bibitem{garcia2021tutorial}
M.~H.~C. Garcia, A.~Molina-Galan, M.~Boban, J.~Gozalvez, B.~Coll-Perales, T.~{\c{S}}ahin, and A.~Kousaridas, ``A tutorial on 5g nr v2x communications,'' \emph{IEEE Communications Surveys \& Tutorials}, vol.~23, no.~3, pp. 1972--2026, 2021.

\bibitem{chen2017vehicle}
S.~Chen, J.~Hu, Y.~Shi, Y.~Peng, J.~Fang, R.~Zhao, and L.~Zhao, ``Vehicle-to-everything (v2x) services supported by lte-based systems and 5g,'' \emph{IEEE Communications Standards Magazine}, vol.~1, no.~2, pp. 70--76, 2017.

\bibitem{Tripadvisor}
tripadvisor. (2024) Tripadvisor homapage. \url{https://www.tripadvisor.com.sg/}. Accessed: 2024-03-19.

\bibitem{yelp}
Yelp. (2024) Yelp homapage. \url{https://www.yelp.com/}. Accessed: 2024-03-19.

\bibitem{cominetti2022approximation}
R.~Cominetti, M.~Scarsini, M.~Schr{\"o}der, and N.~Stier-Moses, ``Approximation and convergence of large atomic congestion games,'' \emph{Mathematics of Operations Research}, 2022.

\bibitem{li2023congestion}
H.~Li and L.~Duan, ``When congestion games meet mobile crowdsourcing: Selective information disclosure,'' in \emph{Proceedings of the AAAI Conference on Artificial Intelligence}, vol.~37, 2023, pp. 5739--5746.

\bibitem{li2017dynamic}
Y.~Li, C.~A. Courcoubetis, and L.~Duan, ``Dynamic routing for social information sharing,'' \emph{IEEE Journal on Selected Areas in Communications}, vol.~35, no.~3, pp. 571--585, 2017.

\bibitem{macault2022social}
E.~Macault, M.~Scarsini, and T.~Tomala, ``Social learning in nonatomic routing games,'' \emph{Games and Economic Behavior}, vol. 132, pp. 221--233, 2022.

\bibitem{cominetti2019price}
R.~Cominetti, M.~Scarsini, M.~Schr{\"o}der, and N.~E. Stier-Moses, ``Price of anarchy in stochastic atomic congestion games with affine costs,'' in \emph{Proceedings of the 2019 ACM Conference on Economics and Computation}, 2019, pp. 579--580.

\bibitem{kim2021simple}
B.~Kim, S.~W. Kim, S.~M. Iravani, and K.~S. Park, ``Simple mechanisms for sequential capacity allocations,'' \emph{Production and Operations Management}, vol.~30, no.~9, pp. 2925--2943, 2021.

\bibitem{li2018customer}
B.~Li, D.~Hao, D.~Zhao, and T.~Zhou, ``Customer sharing in economic networks with costs,'' in \emph{Proceedings of the 27th International Joint Conference on Artificial Intelligence}, 2018, pp. 368--374.

\bibitem{zhu2022information}
Y.~Zhu and K.~Savla, ``Information design in nonatomic routing games with partial participation: Computation and properties,'' \emph{IEEE Transactions on Control of Network Systems}, vol.~9, no.~2, pp. 613--624, 2022.

\bibitem{li2024human}
H.~Li and L.~Duan, ``Human-in-the-loop learning for dynamic congestion games,'' \emph{IEEE Transactions on Mobile Computing}, 2024.

\bibitem{wu2021value}
M.~Wu, S.~Amin, and A.~E. Ozdaglar, ``Value of information in bayesian routing games,'' \emph{Operations Research}, vol.~69, no.~1, pp. 148--163, 2021.

\bibitem{li2024optimize}
H.~Li and L.~Duan, ``To optimize human-in-the-loop learning in repeated routing games,'' \emph{IEEE Transactions on Mobile Computing}, 2024.

\bibitem{giordano2023note}
F.~Giordano, ``A note on social learning in non-atomic routing games,'' \emph{Operations Research Letters}, vol.~51, no.~3, pp. 259--265, 2023.

\bibitem{li2025analyze}
H.~Li and L.~Duan, ``To analyze and regulate human-in-the-loop learning for congestion games,'' \emph{IEEE Transactions on Networking}, 2025.

\bibitem{tavafoghi2017informational}
H.~Tavafoghi and D.~Teneketzis, ``Informational incentives for congestion games,'' in \emph{2017 55th Annual Allerton Conference on Communication, Control, and Computing (Allerton)}.\hskip 1em plus 0.5em minus 0.4em\relax IEEE, 2017, pp. 1285--1292.

\bibitem{farhadi2022dynamic}
F.~Farhadi and D.~Teneketzis, ``Dynamic information design: A simple problem on optimal sequential information disclosure,'' \emph{Dynamic Games and Applications}, vol.~12, no.~2, pp. 443--484, 2022.

\bibitem{kordonis2019mechanisms}
I.~Kordonis, M.~M. Dessouky, and P.~A. Ioannou, ``Mechanisms for cooperative freight routing: Incentivizing individual participation,'' \emph{IEEE Transactions on Intelligent Transportation Systems}, vol.~21, no.~5, pp. 2155--2166, 2019.

\bibitem{ghafelebashi2023congestion}
A.~Ghafelebashi, M.~Razaviyayn, and M.~Dessouky, ``Congestion reduction via personalized incentives,'' \emph{Transportation Research Part C: Emerging Technologies}, vol. 152, p. 104153, 2023.

\bibitem{mansour2022bayesian}
Y.~Mansour, A.~Slivkins, V.~Syrgkanis, and Z.~S. Wu, ``Bayesian exploration: Incentivizing exploration in bayesian games,'' \emph{Operations Research}, vol.~70, no.~2, pp. 1105--1127, 2022.

\bibitem{gkatzelis2016optimal}
V.~Gkatzelis, K.~Kollias, and T.~Roughgarden, ``Optimal cost-sharing in general resource selection games,'' \emph{Operations Research}, vol.~64, no.~6, pp. 1230--1238, 2016.

\bibitem{christodoulou2024resource}
G.~Christodoulou, V.~Gkatzelis, and A.~Sgouritsa, ``Resource-aware cost-sharing methods for scheduling games,'' \emph{Operations Research}, vol.~72, no.~1, pp. 167--184, 2024.

\bibitem{georgoulaki2021equilibrium}
E.~Georgoulaki, K.~Kollias, and T.~Tamir, ``Equilibrium inefficiency and computation in cost-sharing games in real-time scheduling systems,'' \emph{Algorithms}, vol.~14, no.~4, p. 103, 2021.

\bibitem{gkatzelis2021resource}
V.~Gkatzelis, E.~Pountourakis, and A.~Sgouritsa, ``Resource-aware cost-sharing mechanisms with priors,'' in \emph{Proceedings of the 22nd ACM Conference on Economics and Computation}, 2021, pp. 541--559.

\bibitem{christodoulou2020resource}
G.~Christodoulou, V.~Gkatzelis, M.~Latifian, and A.~Sgouritsa, ``Resource-aware protocols for network cost-sharing games,'' in \emph{Proceedings of the 21st ACM conference on economics and computation}, 2020, pp. 81--107.

\bibitem{papadimitriou2001algorithms}
C.~Papadimitriou, ``Algorithms, games, and the internet,'' in \emph{Proceedings of the thirty-third annual ACM symposium on Theory of computing}, 2001, pp. 749--753.

\bibitem{derrow2021eta}
A.~Derrow-Pinion, J.~She, D.~Wong, O.~Lange, T.~Hester, L.~Perez, M.~Nunkesser, S.~Lee, X.~Guo, B.~Wiltshire \emph{et~al.}, ``Eta prediction with graph neural networks in google maps,'' in \emph{Proceedings of the 30th ACM international conference on information \& knowledge management}, 2021, pp. 3767--3776.

\bibitem{Googlecrowdsourcea}
G.~Crowdsource. (2025) Help make ai serve everyone, everywhere. \url{https://crowdsource.google.com/about.html}. Accessed: 2025-05-13.

\bibitem{roughgarden2002bad}
T.~Roughgarden and {\'E}.~Tardos, ``How bad is selfish routing?'' \emph{Journal of the ACM (JACM)}, vol.~49, no.~2, pp. 236--259, 2002.

\bibitem{richman2007topological}
O.~Richman and N.~Shimkin, ``Topological uniqueness of the nash equilibrium for selfish routing with atomic users,'' \emph{Mathematics of Operations Research}, vol.~32, no.~1, pp. 215--232, 2007.

\bibitem{altman2001routing}
E.~Altman, T.~Basar, T.~Jim{\'e}nez, and N.~Shimkin, ``Routing into two parallel links: Game-theoretic distributed algorithms,'' \emph{Journal of Parallel and Distributed Computing}, vol.~61, no.~9, pp. 1367--1381, 2001.

\bibitem{bergemann2005robust}
D.~Bergemann and S.~Morris, ``Robust mechanism design,'' \emph{Econometrica}, pp. 1771--1813, 2005.

\bibitem{balseiro2019dynamic}
S.~R. Balseiro, O.~Besbes, and G.~Y. Weintraub, ``Dynamic mechanism design with budget-constrained buyers under limited commitment,'' \emph{Operations Research}, vol.~67, no.~3, pp. 711--730, 2019.

\bibitem{dai2018aggregation}
W.~Dai, G.~Jin, J.~Lee, and M.~Luca, ``Aggregation of consumer ratings: an application to yelp. com,'' \emph{Quantitative Marketing and Economics}, vol.~16, pp. 289--339, 2018.

\bibitem{googlemapspoints}
G.~Maps. (2025) Local guides points, levels \& badging. \url{https://maps.google.com/intl/en-GB/localguides.html}. Accessed: 2025-05-13.

\bibitem{wazepoints}
Waze. (2025) Your rank and points. \url{https://www.waze.com/discuss/t/your-rank-and-points/379514.html}. Accessed: 2025-05-13.

\bibitem{zhang2024survey}
J.~Zhang, Y.~Bi, M.~Cheng, J.~Liu, K.~Ren, Q.~Sun, Y.~Wu, Y.~Cao, R.~C. Fernandez, H.~Xu \emph{et~al.}, ``A survey on data markets,'' \emph{arXiv preprint arXiv:2411.07267}, 2024.

\bibitem{Gowalladata}
GowallaData. (2022) Gowalla dataset. \url{https://snap.stanford.edu/data/loc-gowalla.html}. Accessed: 2023-01-13.

\bibitem{zhang2020efficient}
Y.~Zhang, X.~Lan, J.~Ren, and L.~Cai, ``Efficient computing resource sharing for mobile edge-cloud computing networks,'' \emph{IEEE/ACM Transactions on Networking}, vol.~28, no.~3, pp. 1227--1240, 2020.

\bibitem{li2024survey}
H.~Li, W.~Huang, Z.~Duan, D.~H. Mguni, K.~Shao, J.~Wang, and X.~Deng, ``A survey on algorithms for nash equilibria in finite normal-form games,'' \emph{Computer Science Review}, vol.~51, p. 100613, 2024.

\end{thebibliography}


%
%
%
%
%
\end{document}